\newtheorem{theorem}{Theorem}
\newtheorem{lemma}{Lemma}
\newtheorem{prop}{Proposition}
\theoremstyle{definition}\newtheorem{definition}{Definition}
\theoremstyle{definition}\newtheorem{remark}{Remark}
\theoremstyle{definition}
\theoremstyle{definition}
\DeclareRobustCommand{\stirling}{\genfrac\{\}{0pt}{}}
\newcommand{\BR}{\mathbb{R}}
\newcommand{\BE}{\mathbb{E}}
\newcommand{\BN}{\mathbb{N}}
\newcommand{\BZ}{\mathbb{Z}}
\newcommand{\SP}{\mathscr{P}}
\newcommand{\SD}{\mathscr{D}}
\newcommand{\SF}{\mathscr{F}}
\newcommand{\supp}{\mathrm{supp}}
\newcommand{\wc}{\mathrel{\;\cdot\;}}
\newcommand{\calC}{\mathcal{C}}
\newcommand{\calD}{\mathcal{D}}
\newcommand{\calN}{\mathcal{N}}
\newcommand{\calP}{\mathcal{P}}
\newcommand{\bg}{\boldsymbol{g}}
\newcommand{\bM}{\boldsymbol{M}}
\newcommand{\bY}{\boldsymbol{Y}}
\newcommand{\bnu}{\pmb{\nu}}
\newcommand{\blambda}{\pmb{\lambda}}
\begin{document}

\title{Polynomial Approximations of Conditional Expectations\\in Scalar Gaussian Channels}
\date{}

\author{Wael Alghamdi and Flavio P. Calmon\thanks{W. Alghamdi and F.P. Calmon are with the John A. Paulson School of Engineering and Applied Sciences at Harvard University. E-mails: \textsf{alghamdi@g.harvard.edu, flavio@seas.harvard.edu}. 
This work was supported in part by the National Science Foundation under Grants CIF 1900750 and CAREER 1845852.}}

\maketitle

\begin{abstract}
    We consider a channel $Y=X+N$ where $X$ is a random variable satisfying $\mathbb{E}[|X|]<\infty$ and $N$ is an independent standard normal random variable. We show that the minimum mean-square error estimator of $X$ from $Y,$ which is given by the conditional expectation $\mathbb{E}[X \mid Y],$ is a polynomial in $Y$ if and only if it is linear or constant; these two cases correspond to $X$ being Gaussian or a constant, respectively. We also prove that the higher-order derivatives of $y \mapsto \mathbb{E}[X \mid Y=y]$ are expressible as multivariate polynomials in the functions $y \mapsto \mathbb{E}\left[ \left( X - \mathbb{E}[X \mid Y] \right)^k \mid Y = y \right]$ for $k\in \mathbb{N}.$ These expressions yield bounds on the $2$-norm of the derivatives of the conditional expectation.  These bounds imply that, if $X$ has a compactly-supported density that is even and decreasing on the positive half-line, then the error in approximating the conditional expectation $\mathbb{E}[X \mid Y]$ by polynomials in $Y$ of degree at most $n$ decays faster than any polynomial in $n.$ 
\end{abstract}

\section{Introduction}

We investigate the extent to which polynomials can approximate conditional expectations in the scalar Gaussian channel. For 
\begin{equation} \label{rk}
    Y = X+N,
\end{equation}
where $X$ has finite variance and $N\sim \calN(0,1)$ is independent of $X$, the conditional expectation $\BE[X \mid Y]$ is the minimum mean-square error (MMSE) estimator: 
\begin{equation}
    \min_{Z} ~ \BE\left[ \left| X - Z \right|^2 \right]  = \BE\left[ \left| X - \BE[X \mid Y] \right|^2 \right],
\end{equation}
where the minimization is taken over all $\sigma(Y)$-measurable random variables $Z.$ It is well-known that $\BE[X \mid Y]$ is linear (i.e., a first degree polynomial in $Y$) if and only if $X$ is Gaussian (see, e.g., \cite{Guo11}). We take this a step further and examine when $\BE[X\mid Y]$ is close to being a polynomial. Specifically, we focus on two questions:
\begin{enumerate}[label=\textbf{(Q\arabic*)}]
    \item For which distributions of $X$ is a polynomial estimator optimal (in the mean-square sense) for reconstructing $X$ from $Y$? \label{sf}
    \item When the MMSE estimator $\BE[X\mid Y]$ is not a polynomial, how well can it be approximated by a polynomial? \label{sd}
\end{enumerate}
In the course of answering \ref{sd}, we answer another fundamental question:
\begin{enumerate}[label=\textbf{(Q3)}]
  \item How can  the higher-order derivatives  of $\BE[X\mid Y=y]$ in $y$ be expressed and bounded? \label{se}
\end{enumerate}

We provide a full answer for \ref{sf} in Theorem~\ref{np}, where we show that the MMSE estimator is a polynomial if and only if $X$ is Gaussian or constant. In other words, the only way $\BE[X \mid Y]$ can be a polynomial is if it is linear in $Y$ or is a constant.

For the second question, if $X$ has a probability density function (PDF) or a probability mass function (PMF) $p_X$ that is compactly-supported, even, and decreasing over $[0,\infty)\cap \supp(p_X),$ then we show in Theorem~\ref{kw} that for all positive integers $n$ and $k$ satisfying $n\ge \max(k-1,1)$ we have that
\begin{equation} \label{rl}
    \inf_{q \in \SP_n} \left\| \BE[X\mid Y] - q(Y) \right\|_2 = O_{X,k}\left( \frac{1}{n^{k/2}} \right).
\end{equation}
Here, $\SP_n$ denotes the set of all polynomials with real coefficients of degree at most $n,$ the implicit constant in \eqref{rl} can depend on both $X$ and $k,$ and $\|\wc\|_2$ refers to the $P_Y$-weighted $2$-norm, i.e., $\|f(Y)\|_2^2=\BE[f(Y)^2].$

The result in \eqref{rl} hinges on our answer to \ref{se} in virtue of it giving a uniform upper bound on the derivatives of the conditional expectation (see Theorem~\ref{og}): there are absolute constants $\{\eta_k\}_{k\ge 1}$ such that
\begin{equation} \label{rm}
    \sup_{\BE[|X|]<\infty} \left\| \frac{d^{k}}{dy^{k}} ~ \BE[X \mid Y=y] \right\|_2 \le \eta_k.
\end{equation}
The bound in \eqref{rm} is a corollary of our answer to the other half of \ref{se}, where we express the derivatives of the conditional expectation in the form (see Proposition~\ref{rs})
\begin{align}
    &\frac{d^{r-1}}{dy^{r-1}} ~ \BE[X \mid Y=y] = \sum_{\substack{2\lambda_2+\cdots+r\lambda_r=r \\ \lambda_2,\cdots,\lambda_r \in \BN}} e_{\lambda_2,\cdots,\lambda_r} \prod_{i=2}^r  \mathbb{E}\left[ \left( X - \mathbb{E}[X \mid Y] \right)^{i} \mid Y = y \right]^{\lambda_i} \label{sc}
\end{align}
for some explicit integers $e_{\lambda_2,\cdots,\lambda_r}$ that we define in the sequel. Setting $r=2$ in \eqref{sc} recovers the first derivative \cite{DytsoDer}
\begin{equation} \label{sh}
    \frac{d}{dy} ~ \BE\left[ X \mid Y = y \right] = \mathrm{Var}\left[ X \mid Y=y \right].
\end{equation}

These results complement our previous work in \cite{Moments}, where we show that if $X$ has a moment generating function (MGF), then there are constants $\{c_{n,j} \}_{n\in \BN, j \in [n]}$ such that
\begin{equation}
    \BE[X\mid Y] = \lim_{n\to \infty} \sum_{j\in [n]} c_{n,j}Y^j
\end{equation}
holds in the mean-square sense. In fact, we may choose
\begin{equation}
    (c_{n,0},\cdots,c_{n,n}) = \BE\left[ (X,XY,\cdots,XY^n) \right] \bM_{Y,n}^{-1}
\end{equation}
where the Hankel matrix of moments of $Y$ is denoted by
\begin{equation}
    \bM_{Y,n}:=\left(\BE \left[ Y^{i+j} \right]\right)_{(i,j)\in [n]^2}.
\end{equation}
Denoting $\bY^{(n)}=(1,Y,\cdots,Y^n)^T,$ the polynomial
\begin{equation}
    E_n[X\mid Y] := \BE\left[ (X,XY,\cdots,XY^n) \right] \bM_{Y,n}^{-1} \bY^{(n)}
\end{equation}
is the orthogonal projection of $\BE[X \mid Y]$ onto the subspace $\SP_n(Y):=\{p(Y) \mid p\in \SP_n\}.$ This projection characterization, in turn, makes $E_n[X \mid Y]$ the best polynomial approximation (in the weighted $L^2$-norm sense) of the conditional expectation $\BE[X \mid Y].$ Specifically, $E_n[X\mid Y]$ uniquely solves the approximation problem
\begin{equation} \label{kp}
    E_n[X\mid Y] = \underset{q(Y)\in \SP_n(Y)}{\mathrm{argmin}} ~  \left\|q(Y) - \BE[X\mid Y] \right\|_2.
\end{equation}

For \eqref{rl}, we apply solutions to the Bernstein approximation problem (see \cite{Lubinsky2007} for a comprehensive survey). The original Bernstein approximation problem extends Weierstrass approximation to polynomial approximation in $L^\infty(\BR,\mu)$ for a measure $\mu$ that is absolutely continuous with respect to the Lebesgue measure. The work by Ditzian and Totik \cite{Lotsch2009}---which introduces moduli of smoothness---shows that tools used to solve the Bernstein approximation problem can also be useful for polynomial approximation in $L^p(\BR,\mu)$ for all $p\ge 1.$ We apply their results for the case $p=2.$

MMSE estimation in Gaussian channels plays a central role in several information-theoretic applications (e.g., \cite{Guo2005,Guo11,Wu2012,Han16,Reeves19}). The MMSE dimension \cite{Wu11} is a measure of nonlinearity of the MMSE estimator. The first-order derivative of the conditional expectation in Gaussian channels has been treated in \cite{DytsoDer}. In particular, formula~\eqref{sh} is generalized in \cite{DytsoDer} to the multivariate case. To the best of our knowledge, no generalization such as~\eqref{sc} to the higher-order derivatives exists in the literature. 

The bound in~\eqref{rl} induces a bound on the gap between the MSE achieved by polynomial estimators and the MMSE. Indeed, the loss from replacing the MMSE estimator $\BE[X\mid Y]$ with its best polynomial approximation $E_n[X\mid Y]$ is
\begin{equation}
    \Delta_{n,X} := \|X-E_n[X\mid Y]\|_2^2 - \|X-\BE[X\mid Y]\|_2^2,
\end{equation}
which satisfies
\begin{equation}
    \Delta_{n,X} \le 2 \|X-E_n[X\mid Y]\|_2 \|E_n[X\mid Y] - \BE[X\mid Y]\|_2.
\end{equation}
Hence,~\eqref{rl} yields the bounds $\Delta_{n,X} = O_{X,\ell}(n^{-\ell})$ for every fixed $\ell>0.$ We note that utilizing higher-order polynomials as proxies of the MMSE has appeared, e.g., in approaches to denoising~\cite{Cha2018}.

Formulas for the conditional expectation that do not require computation of conditional distributions are desirable in practice. For example, the Tweedie formula for the conditional expectation $\BE[X\mid Y=y] = y + p_Y'(y)/p_Y(y)$ helped develop the empirical Bayes method \cite{Robbins}. Similarly, the formula for the higher-order derivatives~\eqref{sc} might shed light on practical applications. For instance, one may obtain a uniform bound $|(d^k/dy^k)\BE[X\mid Y=y]|\le M^kk!$ if, e.g., $X$ is bounded. This implies that the conditional expectation is real analytic. In particular, knowledge of the moments $\BE[X^\ell \mid Y=0]$ (for $\ell \in \BN$) suffices to obtain $\BE[X \mid Y=y]$ on the neighborhood $y\in (-1/M,1/M)$ via Taylor's expansion and the derivative expressions~\eqref{sc}. Further, the value of the conditional expectation $\BE[X\mid Y=y]$ over an interval $y\in (\alpha,\beta)$ is retrievable by its evaluations at only $\lceil M(\beta-\alpha)/2 \rceil+1$ points.

\subsection{Notation} 

The probability measure induced by a random variable (RV) $X$ is denoted by $P_X.$ If $X$ is continuous (resp. discrete), then its PDF (resp. PMF) is denoted by $p_X.$ We use the notation $\|\wc\|_q$ for norms of RVs, i.e., for $q\ge 1$ we have $\left\| X \right\|_q^q = \BE\left[ |X|^q \right].$ We say that a RV $X$ is $n$-times integrable if it satisfies $\|X\|_n<\infty,$ and it is integrable if $\|X\|_1<\infty.$ The norm of the Banach space $L^q(\BR)$ (for $q\ge 1$) is denoted by $\|\wc\|_{L^q(\BR)}.$  

The characteristic function of a RV $Z$ is denoted by $\varphi_Z(t):= \BE\left[ e^{itZ} \right].$ We let $\SP_n$ denote the set of polynomials of degree at most $n$ with real coefficients. For $n\in \BN,$ we set $[n]:=\{0,1,\cdots,n\}$ and denote the set of all finite-length tuples of non-negative integers by $\BN^*$. 

For every integer $r\ge 2,$ let $\Pi_r$ be the set of unordered integer partitions $r=r_1+\cdots+r_k$ of $r$ into integers $r_j\ge 2.$ We encode $\Pi_r$ via a list of the multiplicities of the parts as
\begin{equation} \label{so}
    \Pi_r := \left\{ (\lambda_2,\cdots,\lambda_\ell) \in \BN^{*} ~ ; ~ 2\lambda_2+\cdots+\ell \lambda_\ell = r \right\}.
\end{equation}
In \eqref{so}, $\ell\ge 2$ is free, and trailing zeros are ignored (i.e., $\lambda_{\ell}>0$). For a partition $(\lambda_2,\cdots,\lambda_\ell)=\blambda \in \Pi_r$ having $m=\lambda_2+\cdots+\lambda_\ell$ parts, we denote\footnote{The integer $c_{\blambda}$ counts the number of cyclically-invariant ordered set-partitions of an $r$-element set into $m=\lambda_2+\cdots+\lambda_\ell$ subsets where, for each $k\in \{2,\cdots,\ell\},$ exactly $\lambda_k$ parts have size $k.$}
\begin{equation} \label{ss}
    c_{\blambda} := \frac{1}{m} \binom{m}{\lambda_2,\cdots,\lambda_\ell} \binom{r}{\underbrace{2,\cdots,2}_{\textstyle \lambda_2};\cdots;\underbrace{\ell,\cdots,\ell}_{\textstyle  \lambda_\ell}}
\end{equation}
and
\begin{equation} \label{sp}
    e_{\blambda} := (-1)^{m-1} c_{\blambda}.
\end{equation}
We set\footnote{The integer $C_r$ counts the total number of cyclically-invariant ordered set-partitions of an $r$-element set into subsets of sizes at least $2.$} $C_r:=\sum_{\blambda \in \Pi_r} c_{\blambda}.$ Let $\stirling{r}{m}$ denote the Stirling numbers of the second kind (i.e., the number of unordered set-partitions of an $r$-element set into $m$ nonempty subsets). The integer $C_r$ can be expressed as 
\begin{equation} \label{rx}
    C_r = \sum_{k=1}^r (k-1)!  \sum_{j=0}^k (-1)^{j} \binom{r}{j} \stirling{r-j}{k-j}.
\end{equation}
The first few values of $C_r$ (for $2\le r \le 7$) are given by $1, 1, 4, 11, 56, 267,$ and as $r\to \infty$ we have the asymptotic $C_r \sim (r-1)!/\alpha^r$ for some constant $\alpha \approx 1.146$ (see \cite{OEIS}). The crude bound $C_r<r^r$ can also be seen by a combinatorial argument.  For completeness, equation~\eqref{rx} is derived in Appendix~\ref{aaj}.

\subsection{Assumptions}

We assume only that $X$ is integrable and $N\sim \calN(0,1)$ is independent of $X$ to prove that the conditional expectation $\BE[X\mid X+N]$ cannot be a polynomial of degree exceeding~$1$ (Theorem~\ref{np}) and derive the formula for the higher-order derivatives of the conditional expectation (Proposition~\ref{rs}) along with the ensuing bounds on the norms of the derivatives (Theorem~\ref{og}). For the Bernstein approximation theorem we prove for $\BE[X\mid X+N]$ (Theorem~\ref{kw}), we impose the additional assumption that $X$ is either continuous or discrete with a PDF or a PMF belonging to the set we define next.
\begin{definition} \label{kx}
Let $\SD$ denote the set of compactly-supported even PDFs or PMFs $p$ that are non-increasing over $[0,\infty)~\cap~ \mathrm{supp}(p).$
\end{definition}

\section{Polynomial Conditional Expectation} \label{qj}

We start by showing that the only way $\BE[X\mid Y]$ can be a polynomial, for integrable $X$ and $Y=X+N$ a Gaussian perturbation, is if $X$ is Gaussian or constant. The proof is carried in two steps. First, we show that a degree-$m$ non-constant polynomial $\BE[X\mid Y]$ requires $p_Y = e^{-h}$ for some polynomial $h$ with $\deg h = m+1.$ The second step is showing that, because $p_Y=e^{-h}$ is a convolution of the Gaussian kernel, $m =1 .$

The following lemma will be useful for the proof of Theorem~\ref{np}.

\begin{lemma} \label{rv}
For a RV $X$ and a polynomial $p,$ if $p(X)$ is integrable then so is $X^{\deg(p)}.$
\end{lemma}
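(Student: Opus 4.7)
The plan is to use the fact that a polynomial of degree $n$ is asymptotically dominated below by its leading term. Write $p(x) = a_n x^n + a_{n-1}x^{n-1} + \cdots + a_0$ with $a_n \neq 0$ and $n = \deg(p).$ Since $p(x)/x^n \to a_n$ as $|x| \to \infty,$ one can pick a threshold $R>0$ large enough that
\begin{equation*}
    |p(x)| \ge \frac{|a_n|}{2}\, |x|^n \quad \text{for all } |x|\ge R.
\end{equation*}
This is just a continuity/limit statement and will be the one line of real analysis needed.

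From there I would split the expectation as $\BE[|X|^n] = \BE[|X|^n \,\mathbf{1}_{\{|X|<R\}}] + \BE[|X|^n \,\mathbf{1}_{\{|X|\ge R\}}].$ The first term is trivially bounded by $R^n.$ For the second term, the inequality above gives
\begin{equation*}
    \BE\bigl[|X|^n \,\mathbf{1}_{\{|X|\ge R\}}\bigr] \le \frac{2}{|a_n|} \, \BE\bigl[|p(X)| \,\mathbf{1}_{\{|X|\ge R\}}\bigr] \le \frac{2}{|a_n|}\, \BE[|p(X)|],
\end{equation*}
which is finite by hypothesis. Adding the two pieces yields $\BE[|X|^n] < \infty,$ i.e., $X^{\deg(p)}$ is integrable.

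There is no real obstacle here; the statement is essentially a tail-comparison fact, and the only subtlety is remembering to treat the bounded region $\{|X|<R\}$ separately (on which the lower bound $|p(x)|\ge (|a_n|/2)|x|^n$ can fail, for instance if $p$ has a zero). The degenerate case $n=0$ is trivial since $X^0 = 1$ is integrable automatically.
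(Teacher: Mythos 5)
Your proof is correct, and it takes a genuinely different (and more streamlined) route than the paper's. You argue directly: since $|p(x)|/|x|^n \to |a_n| > 0$, the leading term dominates on the tail $\{|x|\ge R\}$, and splitting $\BE[|X|^n]$ over $\{|X|<R\}$ and $\{|X|\ge R\}$ immediately gives finiteness. The paper instead argues by contraposition: assuming $\BE[|X|^{\deg p}]=\infty$, it takes the largest $k$ with $\BE[|X|^k]<\infty$, writes $p(u)=u^{k+1}q(u)+r(u)$ with $\deg r \le k$, observes that $r(X)$ is integrable, and then shows $X^{k+1}q(X)$ is non-integrable by bounding $|q|$ from below by its leading coefficient outside a bounded set. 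Both arguments rest on the same elementary fact---a polynomial is bounded below by a positive multiple of its leading monomial outside a compact set---but the paper applies it to the quotient $q$ after peeling off an integrable remainder, whereas you apply it to $p$ itself. The intermediate step through $k$ buys nothing here; your direct tail comparison is shorter and equally rigorous, and you correctly flag the two places where care is needed (the bounded region where the lower bound can fail, and the degenerate case $n=0$).
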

\begin{proof}
See Appendix~\ref{aae}.
\end{proof}

This lemma will allow us to conclude the finiteness of all moments of $X$ directly from the hypotheses that $\BE[X\mid Y]$ is a polynomial of degree exceeding $1$ and $\|X\|_1<\infty,$ because we have the inequalities $\| \BE[X\mid Y] \|_k \le \|X\|_k$ for every $k\ge 1.$ 

\begin{theorem} \label{np}
For $Y=X+N$ where $X$ is an integrable RV and $N\sim \calN(0,1)$ independent of $X,$ the conditional expectation $\BE[X \mid Y]$ cannot be a  polynomial in $Y$ with degree greater than 1.  Therefore, the MMSE estimator in a Gaussian channel with finite-variance input is a polynomial if and only if the input is Gaussian or constant.
\end{theorem}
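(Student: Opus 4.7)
The plan is to follow the two-step outline sketched just before the theorem: first, Tweedie's formula converts the polynomial hypothesis into an explicit exponential-of-polynomial shape for $p_Y$; then, the Gaussian-convolution structure $p_Y = p_X * \phi$ is shown to be incompatible with that shape once the degree exceeds $1$. Lemma~\ref{rv} is used at the outset because it guarantees that whenever $\BE[X\mid Y]$ is polynomial, all moments of $X$ (and hence of $Y$) are finite, legitimizing the differentiations and integrations below.

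For the first step, since $N\sim \calN(0,1)$ is independent of $X$, the density $p_Y=p_X*\phi$ (with $\phi$ the standard Gaussian density) is smooth and strictly positive, and Tweedie's formula reads $\BE[X\mid Y=y]=y+p_Y'(y)/p_Y(y)$. Suppose $\BE[X\mid Y=y]=P(y)$ with $m:=\deg P\ge 2$. Then $(\log p_Y)'(y)=P(y)-y$ is a polynomial of degree $m$, so integrating yields $p_Y(y)=c\,e^{-h(y)}$ with $h$ a real polynomial of degree $m+1\ge 3$. Since $p_Y$ is a probability density, $h(y)\to+\infty$ as $|y|\to\infty$, which forces $\deg h = m+1$ to be even with positive leading coefficient; this already rules out $m$ even, and leaves $m$ odd with $\deg h\ge 4$.

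For the second step, I would extend $p_Y$ to an entire function via
\[ p_Y(z) := \int_{\BR} \phi(z-x)\, dP_X(x), \qquad z\in \mathbb{C}, \]
and compare the growth rates of the two expressions. Writing $z=a+bi$ and using $|e^{-(z-x)^2/2}|=e^{-(a-x)^2/2}\,e^{b^2/2}\le e^{b^2/2}$ together with $P_X(\BR)=1$ gives the universal upper bound $|p_Y(z)|\le (2\pi)^{-1/2}\, e^{(\mathrm{Im}\,z)^2/2}$, so $p_Y$ is an entire function of order at most $2$. On the other hand, with $2k:=\deg h\ge 4$ and $A>0$ the leading coefficient of $h$, along the ray $z=re^{i\pi/(2k)}$ we have $\mathrm{Re}\,h(z)=-A\,r^{2k}+O(r^{2k-1})$, so $|p_Y(re^{i\pi/(2k)})|=c\,e^{A\,r^{2k}(1+o(1))}$. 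Since $2k>2$, this super-quadratic growth contradicts the $e^{r^2/2}$ envelope along that ray, ruling out $m\ge 2$ and proving the first sentence. The closing sentence follows from the classical characterization that, for finite-variance $X$, an affine conditional expectation $\BE[X\mid Y]$ forces $X$ to be Gaussian (or constant).

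The main obstacle is the growth comparison in the second step --- specifically, making sure that the lower-order terms of $h$ (which contribute $O(r^{2k-1})$ to $\mathrm{Re}\,h$) do not undo the super-quadratic lower bound on $|p_Y(re^{i\pi/(2k)})|$. A secondary technical point is that the upper bound on the entire extension must hold without assuming $X$ has a density, but the mixed representation $p_Y(z)=\int\phi(z-x)\,dP_X(x)$ against the probability measure $P_X$ and the elementary inequality $|e^{-(z-x)^2/2}|\le e^{(\mathrm{Im}\,z)^2/2}$ handle this directly.
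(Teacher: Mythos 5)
Your proof is correct, and while it follows the same two-step outline the paper announces (force $p_Y=ce^{-h}$ with $\deg h=m+1$, then show the Gaussian-convolution structure forbids $\deg h>2$), both steps are implemented by genuinely different means. For step one, you use Tweedie's formula directly, which is lighter than the paper's route: the paper derives an ODE for $\varphi_Y$ from the orthogonality relation $\BE[e^{itY}(X-q(Y))]=0$, needs Lemma~\ref{rv} to justify differentiating $\varphi_Y$ a total of $m$ times, verifies integrability of each $\varphi_Y^{(k)}$, and then Fourier-inverts; your version needs only the boundedness of $x\mapsto xe^{-(x-y)^2/2}$ and never uses higher moments of $X$ (so your invocation of Lemma~\ref{rv} is actually superfluous). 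For step two, the paper exponentially tilts $P_X$ by $e^{-x^2/2}$ to manufacture a random variable whose cumulant generating function is the degree-$(m+1)$ polynomial $G(y)+y^2/2-\log(a\sqrt{2\pi})$ and then cites Marcinkiewicz's theorem; you instead analytically continue $p_Y(z)=\int\phi(z-x)\,dP_X(x)$, bound it by $(2\pi)^{-1/2}e^{(\mathrm{Im}\,z)^2/2}$, and exhibit super-quadratic growth of $ce^{-h(z)}$ along the ray $\arg z=\pi/(2k)$. This is in effect an inlined, self-contained proof of exactly the case of Marcinkiewicz's theorem the paper cites (the classical proof of that theorem is the same order-of-entire-function argument), and it gives you the parity fact that $m$ must be odd for free, which the paper relegates to a footnote. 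The ``main obstacle'' you flag is not one: since $2k\ge 4$ and $A>0$, the quantity $Ar^{2k}(1+o(1))-r^{2}/2$ tends to $+\infty$, so the $O(r^{2k-1})$ contribution of the lower-order terms of $h$ is harmless. Two small points to tidy up: justify that $z\mapsto\int\phi(z-x)\,dP_X(x)$ is entire (Morera plus Fubini, using the locally uniform bound $|e^{-(z-x)^{2}/2}|\le e^{(\mathrm{Im}\,z)^{2}/2}$) before applying the identity theorem to equate it with $ce^{-h(z)}$; and for the closing sentence, handle the constant case explicitly --- the paper does so by noting that a constant $\BE[X\mid Y]$ forces $\mathrm{Var}[X\mid Y=y]=\frac{d}{dy}\BE[X\mid Y=y]=0$, hence $X$ is constant --- rather than folding it silently into the Gaussian characterization.
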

\begin{proof}
Suppose, for the sake of contradiction, that
\begin{equation} \label{nk}
    \BE[X \mid Y] = q(Y)
\end{equation}
for some polynomial with real coefficients $q$ of degree $m:=\deg q >1.$ The contradiction we derive will be that the probability measure defined by
\begin{equation} \label{nz}
    Q(B) := \frac{1}{a}\int_B e^{-x^2/2} \, dP_X(x)
\end{equation}
for every Borel subset $B\subset \BR,$ where $a= \BE\left[e^{-X^2/2}\right]$ is the normalization constant, would necessarily have a cumulant generating function that is a polynomial of degree $m+1>2.$ Let $R$ be a RV distributed according to $Q.$ We note that the polynomial $q$ is uniquely determined by \eqref{nk} because $Y$ is continuous, for if $q(Y)=g(Y)$ for a polynomial $g$ then the support of $Y$ must be a subset of the roots of $p-g.$ 

The proof strategy is to compute the PDF $p_Y$ in two ways. One way is to compute $p_Y$  as a convolution
\begin{equation} \label{st}
    p_Y(y) = \frac{1}{\sqrt{2\pi}} \BE\left[ e^{-(X-y)^2/2} \right].
\end{equation}
This equation shows by Lebesgue's dominated convergence that $p_Y$ is continuous. The second way to compute $p_Y$ is via the inverse Fourier transform of $\varphi_Y.$ We consider the Fourier transform that takes an integrable function $\varphi$ to $\widehat{\varphi}(y):=\int_{\mathbb{R}} \varphi(t)e^{-iy t}\, dt,$ so the inverse Fourier transform takes an integrable function $p$ to $(2\pi)^{-1}\int_{\BR} p(y)e^{ity}\, dy.$ Now, $\varphi_Y=\varphi_X\varphi_N$ is integrable; indeed, $|\varphi_X|\le 1$ and $\varphi_N(t)=e^{-t^2/2}.$ Also, being a characteristic function, $\varphi_Y$ is continuous too. Therefore, by the Fourier inversion theorem, since $\varphi_Y/(2\pi)$ is the inverse Fourier transform of $p_Y,$ we obtain that $p_Y = \widehat{\varphi_Y}/(2\pi).$ Equating this latter equation with~\eqref{st}, then multiplying both sides by $\sqrt{2\pi}e^{y^2/2}/a,$ that $R\sim Q$ (see~\eqref{nz}) implies
\begin{equation} \label{sz}
    \BE\left[ e^{Ry} \right] = \frac{1}{a\sqrt{2\pi}}e^{y^2/2} \widehat{\varphi_Y}(y).
\end{equation}
Equation \eqref{sz} holds for every $y\in \BR.$ The rest of the proof derives a contradiction by showing that $\widehat{\varphi_Y}=e^G$ for some polynomial $G$ of degree $m+1>2.$

Integrability of $X$ implies integrability of $\BE[X\mid Y],$ so for every $t\in \BR$
\begin{equation} \label{nj}
    \BE\left[ e^{itY} (X-\BE[X\mid Y]) \right] = 0.
\end{equation}
Substituting $X=Y-N$ and $\BE[X \mid Y] = q(Y)$ into \eqref{nj},
\begin{equation}
    \BE\left[ e^{itY} (Y-N-q(Y)) \right] = 0.
\end{equation}
Because the RVs $e^{itY} (Y-q(Y))$ and $e^{itY}N$ are integrable, we may split the expectation to obtain
\begin{equation} \label{nl}
    \BE\left[ e^{itY}(Y-q(Y)) \right] - \BE\left[e^{itY}N \right] = 0.
\end{equation}
We rewrite equation \eqref{nl} in terms of the characteristic functions of $Y$ and $N.$ 

Since $q(Y)$ is integrable, Lemma~\ref{rv} implies that $Y$ is $m$-times integrable. In particular, we have $\BE\left[ \left| (X+z)^m \right|\right] <\infty$ for some $z\in \BR.$ By Lemma~\ref{rv} again, $X$ is $m$-times integrable. Hence, for each $k\in [m]$ and $Z\in \{X,N,Y\},$ that $\BE\left[ |Z|^k \right]<\infty$ implies that the $k$-th derivative $\varphi_Z^{(k)}$ exists everywhere and
\begin{equation} \label{nm}
    (-i)^k \varphi_Z^{(k)}(t) = \BE\left[ e^{itZ} Z^k \right].
\end{equation}

For the term $\BE\left[ e^{itY}N \right]$ in \eqref{nl}, plugging in $Y=X+N,$ we infer from \eqref{nm} that
\begin{equation} \label{rn}
    \BE\left[ e^{itY}N\right] = \varphi_X(t) \BE\left[ e^{itN}N\right] = -i \varphi_X(t)\varphi_N'(t).
\end{equation}
But $\varphi_N(t) = e^{-t^2/2},$ so $\varphi_N'(t) = -t \varphi_N(t),$ hence \eqref{rn} yields
\begin{equation} \label{nn}
    \BE\left[ e^{itY}N\right] = it \varphi_X(t)\varphi_N(t) = it \varphi_Y(t).
\end{equation}
Let $\alpha_k$ for $k\in [m]$ be real constants such that $q(u) = \sum_{k\in [m]} \alpha_k u^k$ identically over $\BR,$ so $\alpha_m\neq 0.$ For the first term in \eqref{nl}, utilizing \eqref{nm} repeatedly we obtain
\begin{equation} \label{no}
    \BE\left[ e^{itY}(Y-q(Y)) \right] = -i\sum_{k \in [m]} c_k \varphi_Y^{(k)}(t)
\end{equation}
where we define the constants
\begin{equation} \label{nu}
    c_k := (-i)^{k+1}\alpha_k + \delta_{1,k} = \left\{ \begin{array}{cl} 
    (-i)^{k+1}\alpha_k & \text{if } k\in [m]\setminus \{1\}, \\
    1-\alpha_1 & \text{if } k=1.
    \end{array} \right.
\end{equation}
Plugging \eqref{nn} and \eqref{no} in \eqref{nl}, we get the differential equation
\begin{equation} \label{nq}
    t \varphi_Y(t) + \sum_{k \in [m]} c_k \varphi_Y^{(k)}(t) = 0.
\end{equation}
We will transform the differential equation \eqref{nq} into a linear differential equation in the Fourier transform of $\varphi_Y.$ For this, we need first to show that for each $k\in [m]$ the derivative $\varphi_Y^{(k)}$ is integrable so that its Fourier transform is well-defined.

Now, repeated differentiation of $\varphi_Y(t)=\varphi_X(t)e^{-t^2/2}$ shows that for each $k\in [m]$ there is a polynomial $r_k$ in $k+2$ variables such that
\begin{equation} \label{si}
    \varphi_Y^{(k)}(t) = r_k\left( t, \varphi_X(t),\varphi_X'(t), \cdots,\varphi_X^{(k)}(t) \right) e^{-t^2/2}.
\end{equation}
Indeed, we start with $r_0(t,u)=u$ because $\varphi_Y(t)=\varphi_X(t)e^{-t^2/2}.$ Now, suppose \eqref{si} holds for some $k\in [m-1].$ The derivative (with respect to $t$) of the $r_k$ term is
\begin{equation}
    \frac{d}{dt} r_k\left( t, \varphi_X(t), \cdots,\varphi_X^{(k)}(t) \right) = s_k\left( t,\varphi_X(t),\cdots,\varphi_X^{(k+1)}(t) \right)
\end{equation}
for some polynomial $s_k$ in $k+3$ variables. Therefore, differentiating \eqref{si}, we get
\begin{equation}
    \varphi_Y^{(k+1)}(t) = r_{k+1}\left( t, \varphi_X(t),\varphi_X'(t), \cdots,\varphi_X^{(k+1)}(t) \right) e^{-t^2/2}
\end{equation}
where
\begin{align}
    r_{k+1}\left(t,u_0,\cdots,u_{k+1}\right) &:= s_k\left( t,u_0,\cdots,u_{k+1}\right) - t \cdot r_k\left( t,u_0,\cdots,u_k\right)
\end{align}
is a polynomial in $k+3$ variables. Therefore \eqref{si} holds for all $k\in [m].$ Now, for each $j\in [m],$ we have by \eqref{nm} the uniform bound $| \varphi_X^{(j)}(t) | \le \BE\left[ |X|^j \right].$ Therefore, for each $k\in [m],$ letting $v_k$ be the same polynomial as $r_k$ but with the coefficients replaced with their absolute values, the triangle inequality applied to~\eqref{si} yields the bound $| \varphi_Y^{(k)}(t) | \le \eta_k(t) e^{-t^2/2}$ where $\eta_k(t) := v_k\left(|t|,1,\BE[|X|],\cdots,\BE\left[|X|^k\right]\right)$ is a (positive) polynomial in $|t|.$ Since $\int_\BR \eta_k(t)e^{-t^2/2} \, dt <\infty,$ we obtain that $\varphi_Y^{(k)}$ is integrable for each $k\in [m].$

Taking the Fourier transform in the differential equation \eqref{nq} we infer
\begin{equation} \label{nt}
    i\widehat{\varphi_Y}'(y) + \widehat{\varphi_Y}(y) \sum_{k\in [m]} c_k (iy)^k = 0.
\end{equation}
We rewrite this equation in terms of the $\alpha_k$ (see~\eqref{nu}) as
\begin{equation} 
    \widehat{\varphi_Y}'(y) - \widehat{\varphi_Y}(y) \sum_{k\in [m]} (\alpha_k-\delta_{1,k}) y^k = 0.
\end{equation}
Equation \eqref{nt} necessarily implies
\begin{equation}
    \widehat{\varphi_Y}(y) = D \exp\left( \sum_{k\in [m]}  \frac{\alpha_k-\delta_{1,k}}{k+1} y^{k+1}\right)
\end{equation}
for some constant $D.$ Since $p_Y=\widehat{\varphi_Y}/(2\pi),$ we necessarily have $D> 0.$ Therefore, we obtain the desired form for $\widehat{\varphi_Y},$ namely, $\widehat{\varphi_Y}=e^G$ where $G\in \SP_{m+1}\setminus \SP_{m}$ is given by\footnote{It can also be shown that we necessarily have $\alpha_m<0$ and $m$ is odd, but these points are moot since we eventually have a contradiction.}
\begin{equation}
    G(y) := \sum_{k\in [m]} \frac{\alpha_k-\delta_{1,k}}{k+1} y^{k+1} + \log(D).
\end{equation}
Plugging in this formula for $\widehat{\varphi_Y}$ in~\eqref{sz}, we obtain that the cumulant-generating function of the RV $R$ is the degree-$(m+1)$ polynomial $G(y)+y^2/2-\log(a\sqrt{2\pi}),$ contradicting Marcinkiewicz's theorem that a cumulant-generating function has degree at most $2$ if it were a polynomial (see, e.g., \cite[Theorem 2.5.3]{Bryc}). This concludes the proof by contradiction that $\BE[X\mid Y]$ cannot be a polynomial of degree at least $2.$

For the second statement in the theorem, we consider the remaining two cases that $\BE[X\mid Y]$ is a linear expression in $Y$ or is a constant.  If $\BE[X\mid Y]$ is constant, then differentiating and taking the expectation in \eqref{sh} yields that $\|X-\BE[X\mid Y]\|_2=0,$ i.e., $X=\BE[X\mid Y]$ is constant. Finally, under the assumption that $X$ has finite variance, $\BE[X \mid Y]$ is linear if and only if $X$ is Gaussian (see, e.g., \cite{Guo11}). We note that if one requires only that $X$ be integrable, then one may deduce directly from the differential equation \eqref{nq} that a linear $\BE[X\mid Y]$ implies a Gaussian $X$ in this case too, and, for completeness, we end with a proof of this fact.

Assume that $\BE[X\mid Y]=\alpha_1 Y + \alpha_0$ is linear (so $\alpha_1\neq 0$). The differential equation \eqref{nq} becomes
\begin{equation} \label{sj}
    (t-i\alpha_0)\varphi_Y(t) + (1-\alpha_1)\varphi_Y'(t) = 0.
\end{equation}
From \eqref{sj}, we see that $\alpha_1\neq 1,$ because $\varphi_Y$ is nonzero on an open neighborhood around the origin (since $\varphi_Y(0)=1$ and $\varphi_Y$ is continuous). Therefore,
\begin{equation} \label{sk}
    \varphi_Y(t) = C e^{\frac{1}{\alpha_1-1}\left( \frac12 t^2 - i \alpha_0 t \right)},
\end{equation}
for some constant $C.$ Taking $t=0$ in \eqref{sk}, we see that $C=1.$ Therefore, the characteristic function of $Y$ is equal to the characteristic function of a $\calN\left( \frac{\alpha_0}{1-\alpha_1},\frac{1}{1-\alpha_1}\right)$ random variable (by taking $t\to \infty$ in~\eqref{sk}, we get $\alpha_1<1$). In fact, since $\varphi_Y = \varphi_X \cdot \varphi_N,$ we obtain
\begin{equation}
    \varphi_X(t) = e^{-\frac12\cdot \frac{\alpha_1}{1-\alpha_1} \cdot t^2 + it \cdot \frac{\alpha_0}{1-\alpha_1}}.
\end{equation}
Taking $t\to \infty,$ we see that $\alpha_1/(1-\alpha_1)>0,$ i.e., $\alpha_1\in (0,1)$ (note that $\alpha_1\neq 0$ by the assumption that $\BE[X\mid Y]$ is linear).  Therefore, uniqueness of characteristic functions implies that $X$ is Gaussian too.
\end{proof}

\section{Conditional Expectation Derivatives} \label{pc}

We develop formulas for the higher-order derivatives of the conditional expectation, and establish upper bounds. The bounds in Theorem \ref{og} on the norm of the derivatives of the conditional expectation will be crucial in Section \ref{ji} for establishing a Bernstein approximation theorem that shows how well polynomials can approximate the conditional expectation in the mean-square sense.

\begin{theorem} \label{og}
Fix an integrable RV $X$ and an independent $N\sim \calN(0,1),$ and set $Y=X+N.$ Let $r\ge 2$ be an integer, let $C_r$ be as defined in \eqref{rx}, and denote $q_r:=\lfloor (\sqrt{8r+9}-3)/2 \rfloor$ and $\gamma_r:=(2rq_r)!^{1/(4q_r)}.$ We have the bound
\begin{equation}
    \left\| \frac{d^{r-1}}{dy^{r-1}}~ \BE[X\mid Y=y] \right\|_{2}   \le ~ 2^r C_r \min\left( \gamma_r, \|X\|_{2rq_r}^r \right).
\end{equation}
\end{theorem}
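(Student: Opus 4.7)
My starting point is the formula from Proposition~\ref{rs}, which expresses the $(r-1)$-th derivative as a finite sum over partitions $\blambda\in\Pi_r$ of products $e_\blambda\prod_{i\ge 2}\BE[Z^i\mid Y=y]^{\lambda_i}$, with the shorthand $Z:=X-\BE[X\mid Y]$. Minkowski's inequality in $L^2(P_Y)$ together with the identity $\sum_{\blambda\in\Pi_r}|e_\blambda|=C_r$ (which follows from $e_\blambda=(-1)^{m-1}c_\blambda$) reduces the theorem to establishing, for each individual $\blambda\in\Pi_r$, the single-partition bound
\[
\left\| \prod_{i\ge 2} \BE[Z^i\mid Y]^{\lambda_i}\right\|_2 \;\le\; 2^r\min\!\left(\gamma_r,\,\|X\|_{2rq_r}^r\right).
\]

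The heart of the argument is then a Jensen--H\"older chain that bounds the product-norm by a single moment of $|Z|$. Expanding $\|\cdot\|_2^2$ as an expectation, I would first apply conditional Jensen to each factor, in the form $|\BE[Z^i\mid Y]|^{2q\lambda_i}\le \BE[|Z|^{2qi\lambda_i}\mid Y]$ (valid whenever $2q\lambda_i\ge 1$). Next, the generalized H\"older inequality with exponents $p_i = r/(i\lambda_i)$ — which satisfy $\sum_i 1/p_i = \sum_i i\lambda_i/r = 1$ precisely by the partition constraint $\sum_i i\lambda_i = r$ — combined with a second conditional Jensen step, collapses every factor into a common moment $\BE[|Z|^{2rq}\mid Y]$, yielding
\[
\left\|\prod_i \BE[Z^i\mid Y]^{\lambda_i}\right\|_2 \;\le\; \BE\!\left[|Z|^{2rq}\right]^{1/(2q)}
\]
for every integer $q\ge 1$ for which all intermediate exponents lie in the convex regime. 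Inspecting the partitions in $\Pi_r$ shows that this forces $q(q+3)\le 2r$, so the largest admissible integer is $q_r$.

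The last step is to bound $\BE[|Z|^{2rq_r}]^{1/(2q_r)}$ in two complementary ways. Minkowski and conditional Jensen give $\|Z\|_{2rq_r}\le\|X\|_{2rq_r}+\|\BE[X\mid Y]\|_{2rq_r}\le 2\|X\|_{2rq_r}$, producing the moment-dependent bound $2^r\|X\|_{2rq_r}^r$. For the input-free bound, the identity $Z=\BE[N\mid Y]-N$ (which follows from $X=Y-N$) gives $\|Z\|_{2rq_r}\le 2\|N\|_{2rq_r}$, and combining the explicit Gaussian moment $\BE[N^{2rq_r}]=(2rq_r-1)!!=(2rq_r)!/(2^{rq_r}(rq_r)!)$ with the central-binomial estimate $\binom{2rq_r}{rq_r}\le 4^{rq_r}$ to eliminate the $(rq_r)!$ factor yields $2^r(2rq_r)!^{1/(4q_r)}=2^r\gamma_r$. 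Taking the minimum of the two bounds and summing over partitions completes the proof.

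The principal obstacle I anticipate is the combinatorial bookkeeping in the Jensen--H\"older chain: a naive single-Jensen application produces only the $q=1$ case, and extracting the precise quadratic constraint $q(q+3)\le 2r$ — which yields the closed form $q_r=\lfloor(\sqrt{8r+9}-3)/2\rfloor$ — requires tracking the convexity requirements imposed by every partition in $\Pi_r$ simultaneously, with the worst cases arising when some $i\lambda_i$ is maximal. A secondary, more routine, technicality is the Stirling-type manipulation that replaces the cleaner double-factorial $(2rq_r-1)!!^{1/(2q_r)}$ by the closed form $(2rq_r)!^{1/(4q_r)}=\gamma_r$ via the central-binomial inequality.
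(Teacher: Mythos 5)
Your overall architecture matches the paper's: Proposition~\ref{rs}, the triangle inequality, and $\sum_{\blambda\in\Pi_r}|e_{\blambda}|=C_r$ reduce the theorem to a per-partition bound on $\|\bg^{\blambda}(Y)\|_2$, which is handled by a H\"older--Jensen chain and closed out with $\|X-\BE[X\mid Y]\|_k\le 2\min\bigl((k!)^{1/(2k)},\|X\|_k\bigr)$ (you re-derive this via $X-\BE[X\mid Y]=\BE[N\mid Y]-N$ and $\binom{2m}{m}\le 4^m$; the paper simply cites it from \cite{Guo11}). The one genuine divergence is the H\"older step, and there your choice is actually cleaner and stronger than the paper's --- but your commentary about the parameter $q$ misdiagnoses where $q_r$ comes from. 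The paper bounds $\|\bg^{\blambda}(Y)\|_2^2=\bigl\|\prod_{\lambda_i\neq 0}g_i^{2\lambda_i}(Y)\bigr\|_1$ by H\"older with the \emph{single} exponent $s$ equal to the number of nonzero entries of $\blambda$; the quadratic $s(s+3)\le 2r$, hence $s\le q_r$, arises from $r=\sum_i i\lambda_i\ge 2+3+\cdots+(s+1)$, and that is the only reason $q_r$ enters the statement. With your exponents $p_i=r/(i\lambda_i)$ (legitimate, since $\sum_i 1/p_i=1$ and each $p_i\ge1$ because $i\lambda_i\le r$), the chain closes at what you call $q=1$: writing $Z:=X-\BE[X\mid Y]$, H\"older gives $\prod_i\BE\bigl[|\BE[Z^i\mid Y]|^{2r/i}\bigr]^{i\lambda_i/r}$, conditional Jensen with the convex map $x\mapsto x^{2r/i}$ (note $2r/i\ge 2$) bounds each factor by $\BE[|Z|^{2r}]^{i\lambda_i/r}$, and the product telescopes to $\BE[|Z|^{2r}]$, i.e.\ $\|\bg^{\blambda}(Y)\|_2\le\|Z\|_{2r}^{r}$. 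No constraint of the form $q(q+3)\le 2r$ appears in this setup, and there is no reason to seek the \emph{largest} admissible $q$ --- both $\|Z\|_{2rq}^r$ and $(2rq)!^{1/(4q)}$ are nondecreasing in $q$, so larger $q$ only weakens the bound; you have imported the paper's quadratic without the mechanism that produces it. This is not a fatal gap: your $q=1$ bound yields $\|\bg^{\blambda}(Y)\|_2\le 2^r\min\bigl((2r)!^{1/4},\|X\|_{2r}^{r}\bigr)\le 2^r\min\bigl(\gamma_r,\|X\|_{2rq_r}^{r}\bigr)$ by monotonicity of $L^p$ norms and $\bigl((2r)!\bigr)^{q_r}\le(2rq_r)!$, so the stated theorem follows, in fact in slightly sharpened form. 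Just drop the anticipated ``combinatorial bookkeeping'' over $q$ --- that worst-case analysis over partitions is the burden of the paper's $s$-based decomposition, not of yours.
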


For $2\le r \le 7,$ we obtain the first few values of $q_r$ as $1,1,1,2,2,2,$ and we have $q_r \sim \sqrt{2r}$ as $r\to \infty$ (see Remark~\ref{sy} for a way to reduce $q_r$). To prove Theorem~\ref{og}, we first express the derivatives of $y\mapsto \BE[X\mid Y=y]$ as polynomials in the moments of the RV $X_y-\BE[X_y],$ where $X_y$ denotes the RV obtained from $X$ by conditioning on $\{Y=y\}.$

\begin{prop} \label{rs}
Fix an integrable RV $X$ and an independent $N\sim \calN(0,1),$ and let $Y=X+N.$ For each $(y,k)\in \BR\times \BN,$ denote $f(y) := \BE[X \mid Y=y]$ and
\begin{equation}
    g_k(y) := \mathbb{E}\left[ \left( X - \mathbb{E}[X \mid Y] \right)^k \mid Y = y \right].
\end{equation}
For $(\lambda_2,\cdots,\lambda_\ell)=\blambda\in \BN^*,$ denote $\bg^{\blambda} := \prod_{i=2}^\ell g_i^{\lambda_i},$ with the understanding that $g_i^{0}=1.$ Then, for every integer $r\ge 2,$ we have that
\begin{equation} \label{sw}
    f^{(r-1)} = \sum_{\blambda \in \Pi_r} e_{\blambda} \bg^{\blambda},
\end{equation}
where the integers $e_{\blambda}$ are as defined in~\eqref{ss}-\eqref{sp}.
\end{prop}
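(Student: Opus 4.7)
My plan is to recognize $f^{(r-1)}(y)$ as the $r$-th cumulant $\kappa_r(y)$ of the conditional distribution of $X$ given $\{Y=y\}$, and then apply the classical cumulant-to-central-moment inversion formula. The reason this is natural is that by Bayes' rule $p_{X\mid Y}(x\mid y)$ is proportional to $p_X(x)\,e^{-(y-x)^2/2}$, so completing the square in the exponent $tx - (y-x)^2/2$ gives the clean formula
\begin{equation}
    \BE\!\left[ e^{tX} \mid Y=y \right] \;=\; \frac{e^{ty + t^2/2}\, p_Y(y+t)}{p_Y(y)}
\end{equation}
for the conditional MGF. The Gaussian factor dominates $p_X$, so the conditional distribution has finite moments of all orders and every $g_k(y)$ is well-defined with no moment assumption on $X$ beyond integrability.

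Taking logarithms in the MGF formula, the conditional cumulant generating function is $K(t;y) = ty + t^2/2 + \log p_Y(y+t) - \log p_Y(y)$. Differentiating $r$ times in $t$ at $t=0$ kills the $ty+t^2/2$ term for $r\ge 2$ and yields $\kappa_r(y) = (\log p_Y)^{(r)}(y)$. Since Tweedie's identity gives $f(y) = y + (\log p_Y)'(y)$ (the $r=1$ case of the same computation), this produces the key identification
\begin{equation}
    f^{(r-1)}(y) \;=\; \kappa_r(y), \qquad r\ge 2.
\end{equation}

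From here, the proof is a direct application of M\"obius inversion on the lattice of set partitions (equivalently, Fa\`a di Bruno applied to $\log\circ\exp$): for any distribution with finite moments,
\begin{equation}
    \kappa_r \;=\; \sum_{\pi} (-1)^{|\pi|-1}(|\pi|-1)! \prod_{B\in \pi} \tilde\mu_{|B|},
\end{equation}
where $\tilde\mu_k$ denotes the $k$-th central moment and $\pi$ ranges over set partitions of $\{1,\ldots,r\}$. Applied to the conditional distribution with $\tilde\mu_k = g_k(y)$, the vanishing of $g_1$ restricts the sum to partitions whose blocks all have size at least $2$. Grouping those partitions by their block-size multiplicity vector $\blambda = (\lambda_2,\ldots,\lambda_\ell) \in \Pi_r$, the number of set partitions of $\{1,\ldots,r\}$ of type $\blambda$ equals $r!\big/\prod_i (i!)^{\lambda_i}\lambda_i!$; multiplying by $(-1)^{m-1}(m-1)!$ with $m=\lambda_2+\cdots+\lambda_\ell$ reproduces exactly the constants $e_\blambda$ of \eqref{ss}-\eqref{sp}, giving \eqref{sw}.

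The main technical concern I expect is justifying the interchange of $\partial_t^r$ and the conditional expectation in the MGF derivation; this is routine via dominated convergence since the Gaussian factor $e^{-(y-x)^2/2}$ controls the integrand uniformly in $t$ on bounded neighborhoods of $t=0$. A purely algebraic alternative avoiding cumulants is induction on $r$ using the recursion $g_k' = g_{k+1} - k\,g_2\,g_{k-1}$ (which follows from $\partial_y p_{X\mid Y}(x\mid y) = (x - f(y))\, p_{X\mid Y}(x\mid y)$) together with $f' = g_2$; however, through that route the coefficients $e_\blambda$ must be discovered combinatorially rather than explained, so the cumulant route is cleaner.
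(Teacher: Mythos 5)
Your proof is correct, and it takes a genuinely different route from the paper's. The paper works purely algebraically: it derives $f'=g_2$ and the recursion $g_k'=g_{k+1}-kg_2g_{k-1}$, proves by induction that $f^{(r-1)}$ is a homogeneous integer combination of the $\bg^{\blambda}$, extracts a recurrence for the coefficients, and then verifies combinatorially (by counting cyclically-invariant ordered set partitions) that the $e_{\blambda}$ of \eqref{ss}--\eqref{sp} satisfy the same recurrence --- exactly the ``coefficients discovered rather than explained'' route you describe at the end. Your identification $f^{(r-1)}=\kappa_r$, the $r$-th conditional cumulant, followed by the standard M\"obius/partition-lattice inversion $\kappa_r=\sum_{\pi}(-1)^{|\pi|-1}(|\pi|-1)!\prod_{B\in\pi}\tilde\mu_{|B|}$, is shorter and explains where the coefficients come from: your count $(-1)^{m-1}(m-1)!\,r!/\prod_i (i!)^{\lambda_i}\lambda_i!$ does equal $e_{\blambda}$, since $c_{\blambda}=(m-1)!\,r!/\prod_i(i!)^{\lambda_i}\lambda_i!$. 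What the paper's route buys in exchange is that the recursion $g_k'=g_{k+1}-kg_2g_{k-1}$ and the partition-transition bookkeeping are established from scratch, whereas you import the cumulant inversion formula as a black box. Two small points to fix. First, your intermediate claim that differentiating $r$ times kills the $ty+t^2/2$ term ``for $r\ge 2$'' fails at $r=2$, where it contributes $1$; the final identity $f^{(r-1)}=\kappa_r$ is still right because $f'(y)=1+(\log p_Y)''(y)$ carries the matching $+1$ from Tweedie's formula, but you should state $\kappa_2(y)=1+(\log p_Y)''(y)$ rather than $(\log p_Y)''(y)$. Second, the proposition assumes only that $X$ is integrable, not that it has a density; your Bayes'-rule step should be phrased with $dP_X(x)$ in place of $p_X(x)\,dx$ (the completion of the square and the formula $\BE[e^{tX}\mid Y=y]=e^{ty+t^2/2}p_Y(y+t)/p_Y(y)$ go through verbatim, since $p_Y(y)=\BE[e^{-(X-y)^2/2}]/\sqrt{2\pi}$ for arbitrary integrable $X$). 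With those adjustments your argument is complete: the Gaussian factor guarantees the conditional MGF is finite everywhere, hence all conditional moments and cumulants exist under integrability alone, and the dominated-convergence justification for differentiating under the expectation is routine as you say.
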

\begin{proof}
See Appendix~\ref{aad}.
\end{proof}

Now we are ready to prove Theorem~\ref{og}. 

\begin{proof}[Proof of Theorem~\ref{og}]
We use the notation of Proposition~\ref{rs}. Fix $(\lambda_2,\cdots,\lambda_\ell)=\blambda \in \Pi_r.$ By the generalization of H\"{o}lder's inequality stating $\|\psi_1\cdots \psi_k\|_1 \le \prod_{i=1}^k \|\psi_i\|_k,$ we have that
\begin{equation}  \label{sv}
    \left\| \bg^{\blambda}(Y) \right\|_{2}^{2} = \left\| \prod_{\lambda_i \neq 0} g_i^{2\lambda_i}(Y) \right\|_{1} \le \prod_{\lambda_i\neq 0} \left\| g_i^{2\lambda_i}(Y) \right\|_{s}
\end{equation}
where $s$ is the number of nonzero entries in $\blambda.$ By Jensen's inequality for conditional expectation, for each $i$ such that $\lambda_i\neq 0$ we have that
\begin{equation} \label{su}
    \left\| g_i^{2\lambda_i}(Y) \right\|_{s} \le \| X - \BE[X\mid Y] \|_{2i \lambda_i s}^{2i\lambda_i}.
\end{equation}
Now, $r = \sum_{i=2}^\ell i\lambda_i \ge \sum_{i=2}^{s+1} i = \frac{(s+1)(s+2)}{2}-1,$ so we have that $s^2+3s-2r \le 0,$ i.e., $s\le q_r.$ Further, $i\lambda_i \le r$ for each $i.$ Hence, monotonicity of norms and inequalities \eqref{sv} and \eqref{su} imply the uniform (in $\blambda$) bound
\begin{equation}
    \left\| \bg^{\blambda}(Y) \right\|_{2} \le  \| X - \BE[X\mid Y] \|_{2rq_r}^{r}.
\end{equation}
Observe that $\| X - \BE[X\mid Y] \|_{k}\le 2 \min\left( (k!)^{1/(2k)}, \|X\|_k\right)$ (see \cite{Guo11}), so applying the triangle inequality in~\eqref{sw} we obtain
\begin{align}
    \left\| f^{(r-1)}(Y) \right\|_2 &\le \sum_{\blambda \in \Pi_r} c_{\blambda} \left\| \bg^{\blambda}(Y) \right\|_2 \le 2^rC_r ~ \min\left( \gamma_r, \|X\|_{2rq_r}^r \right),
\end{align}
where $\gamma_r=(2rq_r)!^{1/(4q_r)},$ as desired.
\end{proof}
\begin{remark} \label{sy}
A closer analysis reveals that $i\lambda_i s$ in~\eqref{su} cannot exceed $\beta_r:=t_r^2(t_r+1/2)$ for $t_r:=(\sqrt{6r+7}-1)/3.$ For $r\to \infty,$ we have $rq_r/\beta_r \sim 3^{3/2}/2 \approx 2.6.$ The reduction  when, e.g., $r=7,$ is from $rq_r=14$ to $\beta_r=10.$
\end{remark}

\section{A Bernstein Approximation Theorem} \label{ji}

We show that, if $p \in \SD$ (see Definition~\ref{kx}) and $X\sim p,$ then the approximation error given by $\|E_n[X\mid Y] - \BE[X\mid Y]\|_2$ decays faster than any polynomial in $n.$

\begin{theorem} \label{kw}
Fix $p \in \SD,$ let $X\sim p,$ suppose $N\sim\calN(0,1)$ is independent of $X,$ and set $Y=X+N.$ There exists a sequence $\{ D(p,k) \}_{k\in \mathbb{N}}$ of constants such that for all integers  $n\ge \max(k-1,1)$ we have
\begin{equation}
    \left\|E_n[X\mid Y] - \mathbb{E}[X \mid Y] \right\|_2 \le \frac{D(p,k)}{n^{k/2}}.
\end{equation}
\end{theorem}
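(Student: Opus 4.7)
The plan is to combine a Jackson-type inequality for polynomial approximation in weighted $L^2$ spaces with Freud-like weights together with the uniform derivative bound provided by Theorem~\ref{og}.

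\textbf{Reformulation.} By the projection characterization~\eqref{kp}, the approximation error equals
\[
\|E_n[X\mid Y] - \BE[X\mid Y]\|_2 = \inf_{q\in \SP_n} \|q - f\|_{L^2(\BR, p_Y(y)\, dy)},
\]
where $f(y):=\BE[X\mid Y=y]$. Thus it suffices to establish a weighted $L^2$ Jackson-type inequality for $f$ against the weight $p_Y$ with rate $O(n^{-k/2})$.

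\textbf{Step 1: Structure of $p_Y$.} Since $p\in\SD$ is supported in some compact interval $[-A,A]$, the density $p_Y = p \ast \phi$ (with $\phi$ the standard normal PDF, and the sum replacing the integral if $p$ is a PMF) is smooth, everywhere positive, and has Gaussian tails: for all $|y|$ large enough, $c_1 e^{-y^2/2} \le p_Y(y) \le c_2 e^{-y^2/2}$, obtained by bounding the convolution via $|y - x| \in [|y|-A, |y|+A]$ for $x\in \supp(p)$. Evenness of $p$ forces evenness of $p_Y$, and the monotonicity assumption on $p$ yields, via a Wintner-type convolution argument (or Pr\'ekopa's theorem in the log-concave case), that $p_Y$ is non-increasing on $[0,\infty)$. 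Hence $p_Y$ is comparable to the Freud weight $e^{-y^2/2}$ on all of $\BR$, up to bounded multiplicative factors.

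\textbf{Step 2: Jackson-type inequality.} For weights of this Freud type, the theory surveyed in~\cite{Lubinsky2007}, combined with Ditzian-Totik moduli of smoothness specialized to $p=2$, provides: for any $g$ with $g^{(k)}\in L^2(p_Y)$ and $n \ge k-1$,
\[
\inf_{q\in \SP_n} \|g - q\|_{L^2(p_Y)} \le C_k \left(\frac{a_n}{n}\right)^k \|g^{(k)}\|_{L^2(p_Y)},
\]
where $a_n$ is the Mhaskar-Rakhmanov-Saff number. For a Gaussian-type Freud weight, $a_n \asymp \sqrt{n}$, so $(a_n/n)^k \asymp n^{-k/2}$.

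\textbf{Step 3: Bound the derivative norm.} Since $X$ has compact support, $\|X\|_m<\infty$ for every $m\in\BN$. Applying Theorem~\ref{og} with $r=k+1$ yields a finite constant $M_k$ (depending only on $p$ and $k$) with $\|f^{(k)}(Y)\|_2 \le M_k$. Combining with Step~2 gives the desired bound with $D(p,k) := C_k M_k$. The case $k\in\{0,1\}$ (where the theorem's hypothesis reduces to $n\ge 1$) is handled separately and trivially, using $\|f(Y)\|_2 \le \|X\|_2$.

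\textbf{Main obstacle.} The technical crux is verifying that $p_Y$---only available as a convolution rather than in explicit Freud form $e^{-Q}$---satisfies the precise hypotheses of an off-the-shelf Jackson inequality. Standard formulations assume $Q$ is smooth with prescribed growth, whereas here one must either verify that $\log(1/p_Y)$ fits this framework or appeal to a stability/equivalence version of the Jackson inequality that is robust under the bounded-factor equivalence $p_Y \asymp e^{-y^2/2}$. The three assumptions encoded in $\SD$---even, compactly supported, and monotone on $[0,\infty)$---are tailored precisely to enable this reduction.
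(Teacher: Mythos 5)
Your overall architecture is exactly the paper's: reduce to a weighted $L^2$ Jackson inequality via the projection characterization~\eqref{kp}, invoke the Freud-weight approximation theorem from~\cite{Lubinsky2007} with $a_n=O(\sqrt{n})$, and control $\|f^{(k)}(Y)\|_2$ by Theorem~\ref{og} with $r=k+1$ (Steps 2 and 3 are essentially the paper's proof verbatim). The problem is Step 1, which you yourself flag as the technical crux and then resolve with a false claim. The two-sided bound $c_1 e^{-y^2/2}\le p_Y(y)\le c_2 e^{-y^2/2}$ does not hold for any nondegenerate $p\in\SD$: for $x\in[-A,A]$ one only gets $e^{-(y-x)^2/2}\le e^{-y^2/2+A|y|}$ and $e^{-(y-x)^2/2}\ge e^{-y^2/2-A|y|-A^2/2}$, so $p_Y$ and $e^{-y^2/2}$ differ by a factor as large as $e^{2A|y|}$, which is unbounded. (Concretely, for $p=\tfrac12(\delta_{-A}+\delta_{A})$ one has $p_Y(y)e^{y^2/2}\to\infty$.) Since the whole reduction to an ``off-the-shelf'' Jackson inequality for the explicit weight $e^{-y^2/2}$ rests on this bounded-factor equivalence, the gap is genuine: you have not verified that the weight actually in play satisfies the hypotheses of the approximation theorem, and your proposed route to doing so fails.

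The fix --- and what the paper does --- is to forgo any comparison with $e^{-y^2/2}$ and instead verify directly that $Q:=-\log p_Y$ belongs to the Freud class. The key identity is $Q'(y)=\BE[N\mid Y=y]=y-\BE[X\mid Y=y]$ (Lemma~\ref{ii}), from which $\supp(p)\subset[-M,M]$ gives $y-M\le Q'(y)\le y+M$. This sandwich yields condition~\ref{aao} of Definition~\ref{ig} (with $\lambda=M+2$, for $y>M+4$), and together with evenness and monotonicity of $p$ it gives conditions~\ref{aak}--\ref{aan}; it also produces the Mhaskar--Rakhmanov--Saff bound $a_n(Q)\le(2M+\sqrt{2})\sqrt{n}$ directly from the defining integral, rather than by appealing to the value $a_n=\sqrt{n}$ for the weight $e^{-y^2}$. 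With Theorem~\ref{if} and inequality~\eqref{sx} in place, your Steps 2 and 3 go through as written (applying Theorem~\ref{rq} with $W=\sqrt{p_Y}$ and $s=2$, so that $\|hW\|_{L^2(\BR)}=\|h(Y)\|_2$).
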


The proof relies on results on the Bernstein approximation problem in weighted $L^p$ spaces. In particular, we consider the Freud case, where the weight is of the form $e^{-Q}$ for $Q$ of polynomial growth, e.g., a Gaussian weight. 

\begin{definition}[Freud Weights] \label{ig}
A function $W:\mathbb{R} \to (0,\infty)$ is called a \emph{Freud Weight}, and we write $W\in \SF,$ if it is of the form $W = e^{-Q}$ for $Q: \mathbb{R}\to \mathbb{R}$ satisfying:
\begin{enumerate}[label=(\arabic*)]
    \item $Q$ is even, \label{aak}
    \item $Q$ is differentiable, and $Q'(y)>0$ for $y>0,$ \label{aal}
    \item $y\mapsto yQ'(y)$ is strictly increasing over $(0,\infty),$ \label{aam}
    \item $yQ'(y)\to 0$ as $y\to 0^+,$ and \label{aan}
    \item \label{aao} there exist $\lambda,a,b,c > 1$ such that for every $y>c$ 
    \begin{equation}
        a \le \frac{Q'(\lambda y)}{Q'(y)} \le b.
    \end{equation} 
\end{enumerate}
\end{definition}

The convolution of a weight in $\SD$ with the Gaussian weight $\varphi(x) := e^{-x^2/2}/\sqrt{2\pi}$ is a Freud weight. This can be shown by noting that with $p_Y=e^{-Q}$ we have $Q'(y)=\BE[N \mid Y=y].$

\begin{theorem} \label{if}
If $p\in \SD$ and $X\sim p,$ then the probability density function of $X+N,$ for $N\sim \mathcal{N}(0,1)$ independent of $X,$ is a Freud weight.
\end{theorem}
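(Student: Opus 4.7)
The plan is to verify each of the five defining conditions (1)--(5) of the Freud class for the density $p_Y := p * \varphi$ and the potential $Q := -\log p_Y$. Smoothness and strict positivity of $p_Y$ are immediate from Gaussian convolution, so $Q$ is $C^\infty$ on $\BR$. Evenness of $p$ and $\varphi$ forces evenness of $p_Y$, and hence of $Q$, proving (1). The central identity throughout---noted in the paper---is Tweedie's formula
\begin{equation*}
Q'(y) \;=\; y - \BE[X \mid Y = y] \;=\; \BE[N \mid Y = y].
\end{equation*}
Since $Q$ is even and differentiable, $Q'(0) = 0$, so $yQ'(y)\to 0$ as $y\to 0^+$, giving (4).

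For condition (2), I would start from $p_Y(y) Q'(y) = -p_Y'(y) = \int n\, p(y-n)\varphi(n)\,dn$, split $\BR$ into $n>0$ and $n<0$ and use evenness of $\varphi$ to obtain
\begin{equation*}
p_Y(y)\, Q'(y) \;=\; \int_0^\infty u\bigl[p(y-u) - p(y+u)\bigr]\varphi(u)\,du.
\end{equation*}
For $y,u > 0$ we have $|y-u|\le y+u$, so the evenness of $p$ and the non-increasing property on $[0,\infty)\cap\mathrm{supp}(p)$ force $p(y-u)\ge p(y+u)$; nontriviality and compact support then make the integrand strictly positive on a set of positive measure, proving $Q'(y) > 0$. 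Condition (5) follows from an asymptotic analysis: with $\mathrm{supp}(p)\subset[-a,a]$, the posterior of $X$ given $Y = y$ concentrates at the right endpoint $a$ as $y\to\infty$, so $\BE[X\mid Y = y] = a + o(1)$. Hence $Q'(y) = y - a + o(1) \sim y$, which forces $Q'(\lambda y)/Q'(y)\to\lambda$ as $y\to\infty$ and confines the ratio to any prescribed interval about $\lambda$ for all $y$ sufficiently large.

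The main obstacle is condition (3), the strict monotonicity of $y\mapsto yQ'(y)$ on $(0,\infty)$. Differentiating the key identity gives
\begin{equation*}
(yQ'(y))' \;=\; \BE[N\mid Y=y] \;+\; y\bigl(1 - \mathrm{Var}(X \mid Y=y)\bigr).
\end{equation*}
The first summand is positive by the argument for (2), but the posterior variance $\mathrm{Var}(X\mid Y=y)$ may exceed $1$ when $p$ fails to be log-concave, so a naive pointwise bound does not suffice. My approach would be to exploit the layer-cake representation $p = \int_0^{p(0)}\mathbf{1}_{(-a(t),a(t))}\,dt$, which writes $p_Y$ as a positive mixture of the log-concave densities $\mathbf{1}_{(-a,a)}*\varphi$. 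Each mixture component individually satisfies (3), and the heart of the proof is to transfer this monotonicity to the mixture. This will likely require a correlation-type inequality leveraging the fact that both $\BE[X\mid Y=y]$ and $\mathrm{Var}(X\mid Y=y)$ are governed by the same family of Gaussian-tilted posteriors, with the tilting acting monotonically across the layer-cake parameter $t$. Getting this step right is where the main effort of the proof lies.
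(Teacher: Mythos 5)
Your treatment of conditions~\ref{aak}, \ref{aal}, and \ref{aan} is fine, and your route to condition~\ref{aao} (posterior concentration at the right endpoint giving $Q'(y)\sim y$) is a valid alternative to the paper's, which instead uses the elementary two-sided bound $y-M\le Q'(y)\le y+M$ obtained from $Q'(y)=y-\BE[X\mid Y=y]$ and $\supp(p)\subset[-M,M]$; the paper's version is more quantitative (it exhibits explicit $\lambda,a,b,c$), but both work. The problem is condition~\ref{aam}: you have not proven it, only proposed a program (layer-cake decomposition plus a correlation inequality), and you yourself flag this as where the main effort lies. As it stands, the proposal has a genuine gap at exactly the step you identify. (For what it is worth, the paper's own proof is no better here: it disposes of conditions~\ref{aak}--\ref{aan} in a single sentence and only works out Lemma~\ref{ii} and condition~\ref{aao} in detail.)

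Worse, the gap cannot be closed in the stated generality, because condition~\ref{aam} actually fails for some $p\in\SD$, so the correlation inequality you are hoping for is false. Take $p=\frac{1}{2}\cdot\frac{1}{2\epsilon}1_{[-\epsilon,\epsilon]}+\frac{1}{2}\cdot\frac{1}{2M}1_{[-M,M]}$ with $\epsilon$ small and $M$ large (say $M=100$): this is even, compactly supported, and non-increasing on $[0,M]$, hence lies in $\SD$. For $|y|\le M/2$ one has $p_Y(y)\approx\frac{1}{2}\varphi(y)+\frac{1}{4M}$ and $-p_Y'(y)\approx\frac{1}{2}y\varphi(y)$ up to negligible errors, so $yQ'(y)\approx y^2\varphi(y)/\bigl(\varphi(y)+\tfrac{1}{2M}\bigr)$; once $\varphi(y)\ll 1/(2M)$ this is of order $2My^2\varphi(y)$, which is decreasing (e.g., $yQ'(y)$ drops by many orders of magnitude between $y=5$ and $y=10$). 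The same mechanism defeats the layer-cake plan: in the representation $Q'(y)=\BE_t[Q_t'(y)]$ the mixture weights are reweighted by $p_t(y)$ itself, and the narrow components' weights decay like $\varphi(y)$, dragging $yQ'(y)$ back down once the spike's contribution is exhausted. So either an extra hypothesis is needed (log-concavity of $p$ gives $Q''\ge 0$ and makes condition~\ref{aam} immediate from condition~\ref{aal}) or the statement itself must be repaired; this is a defect of the theorem, not merely of your argument. A smaller point: your inequality $p(y-u)\ge p(y+u)$ in the proof of condition~\ref{aal} uses that $p$ is non-increasing on all of $[0,\infty)$, not merely on $[0,\infty)\cap\supp(p)$; the PMF supported on $\{\pm 2\}$ belongs to $\SD$ as literally defined, yet has $Q'(y)=y-2\tanh(2y)<0$ near the origin, so even condition~\ref{aal} can fail without that strengthening.
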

\begin{proof}
See Appendix~\ref{aaf}.
\end{proof}

To be able to state the theorem we borrow from the Bernstein approximation literature, we need first to define the Mhaskar–Rakhmanov–Saff number.

\begin{definition} \label{lb}
If $Q:\mathbb{R}\to \mathbb{R}$ satisfies conditions~\ref{aal}--\ref{aan} in Definition~\ref{ig}, and if $yQ'(y)\to \infty$ as $y\to \infty,$ then the \emph{$n$-th Mhaskar–Rakhmanov–Saff number $a_n(Q)$ of $Q$} is defined as the unique positive root $a_n$ of the equation
\begin{equation}
    n = \frac{2}{\pi} \int_0^1 \frac{a_ntQ'(a_nt)}{\sqrt{1-t^2}} \, dt.
\end{equation}
\end{definition}
\begin{remark}
The condition $yQ'(y) \to \infty$ as $y\to \infty$ in Definition \ref{lb} is satisfied if $e^{-Q}$ is a Freud weight. Indeed, in view of properties~\ref{aal}--\ref{aam} in Definition \ref{ig}, the quantity $\ell := \lim_{y\to \infty} yQ'(y)$ is well-defined and it belongs to $(0,\infty].$ If $\ell\neq \infty,$ then because $\lim_{y\to \infty} \lambda y Q'(\lambda y) = \ell$ too, property~\ref{aao} would imply that $a \le 1/\lambda \le b$ contradicting that $\lambda,a>1.$ Therefore, $\ell=\infty.$
\end{remark}

For example, the weight $W(y)=e^{-y^2},$ for which $Q(y)=y^2,$ has $a_n(Q)=\sqrt{n}$ because $\int_0^1 t^2/\sqrt{1-t^2} \, dt = \frac{\pi}{4}.$ If $X\sim p \in \SD,$ say $\supp(p)\subset [-M,M],$ and $p_{Y}=e^{-Q}$ (where $N\sim \calN(0,1)$ is independent of $X,$ and $Y=X+N$), then (see Appendix~\ref{aag})
\begin{equation} \label{sx}
    a_n(Q) \le \left( 2 M +\sqrt{2} \right)\sqrt{n}.
\end{equation}

We apply the following Bernstein approximation theorem \cite[Corollary 3.6]{Lubinsky2007} to prove Theorem~\ref{kw}. 

\begin{theorem} \label{rq}
Fix $W\in \SF,$ and let $u$ be an $r$-times continuously differentiable function such that $u^{(r)}$ is absolutely continuous. Let $a_n=a_n(Q)$ where $W=e^{-Q},$ and fix $1\le s \le \infty.$ Then, for some constant $D(W,r,s)$ and every $n\ge \max(r-1,1)$ 
\begin{equation}
    \inf_{q\in \mathscr{P}_n} \|(q-u)W\|_{L^s(\mathbb{R})} \le D(W,r,s) \left( \frac{a_n}{n} \right)^r \|u^{(r)}W\|_{L^s(\mathbb{R})}.
\end{equation}
\end{theorem}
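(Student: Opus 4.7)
The final statement is a Jackson-type weighted polynomial approximation theorem for Freud weights, asserting that smooth functions can be approximated at the rate $(a_n/n)^r$ in the weighted $L^s$ norm. My plan follows the Ditzian-Totik-Lubinsky blueprint for the Freud setting: (i) reduce the problem from $\BR$ to the finite interval $[-a_n,a_n]$ using the Mhaskar-Rakhmanov-Saff infinite-finite range inequality, (ii) localize on that interval into subintervals where the weight is essentially constant, (iii) approximate by Taylor polynomials on each piece, and (iv) splice the pieces into a single polynomial of degree $n$ via a polynomial partition of unity.

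First, I would invoke the Mhaskar-Rakhmanov-Saff inequality, which for $s=\infty$ takes the sharp form $\|PW\|_{L^\infty(\BR)}=\|PW\|_{L^\infty([-a_n,a_n])}$ for $P\in\SP_n$ and for general $s\in[1,\infty)$ gives $\|PW\|_{L^s(\BR)}\le C\|PW\|_{L^s([-a_{Cn},a_{Cn}])}$. This localizes the problem to a finite interval, modulo a constant that is absorbed into $D(W,r,s)$. Second, I would partition $[-a_n,a_n]$ into $M\asymp n$ subintervals $I_j$ on which $W=e^{-Q}$ varies by at most a universal factor; Freud condition~\ref{aao} ensures such a partition exists with $|I_j|\lesssim a_n/n$. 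On each $I_j$, the Taylor polynomial $T_j$ of $u$ of degree $r-1$ about the midpoint obeys
\begin{equation*}
    \|(u-T_j)W\|_{L^s(I_j)}\lesssim |I_j|^r\,\|u^{(r)}W\|_{L^s(I_j)},
\end{equation*}
since $W$ may be pulled in and out of the $L^s(I_j)$ norm at the cost of a uniform constant (by near-constancy of $W$ on $I_j$), and the integral form of the Taylor remainder supplies the factor $|I_j|^r$.

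The main obstacle is step (iv): combining the local polynomials $\{T_j\}$ into a \emph{single} polynomial $q\in\SP_n$ whose total degree is controlled. The standard device is a polynomial partition of unity, namely nonnegative polynomials $\phi_j$, each of degree $\asymp n/M$, satisfying $\sum_j \phi_j\equiv 1$ on $[-a_n,a_n]$ and decaying rapidly outside $I_j$; such partitions are built from Christoffel functions or de la Vall\'ee Poussin-type kernels for the orthonormal polynomials of $W^2$, and establishing their concentration and degree properties in the Freud setting is the technically demanding ingredient. Setting $q:=\sum_j T_j\phi_j\in\SP_n$ and summing the local estimates using the essentially disjoint supports of the $\phi_j$ yields
\begin{equation*}
    \|(u-q)W\|_{L^s([-a_n,a_n])}\lesssim \left(\frac{a_n}{n}\right)^r \|u^{(r)}W\|_{L^s([-a_n,a_n])},
\end{equation*}
which combined with the infinite-finite range inequality completes the proof, with $D(W,r,s)$ absorbing the Mhaskar constant, the Taylor remainder constant, and the partition-of-unity constants.
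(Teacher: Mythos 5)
The paper does not actually prove this statement: Theorem~\ref{rq} is imported verbatim from the literature (it is \cite[Corollary~3.6]{Lubinsky2007}), so there is no in-paper proof to compare against. Your outline does track the general strategy used in that literature (infinite--finite range reduction, localization to intervals of length $\asymp a_n/n$, local Taylor approximation, splicing via a partition of unity), but as written it has two concrete gaps beyond the construction of the partition of unity that you already flag as the hard step.

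First, the Mhaskar--Rakhmanov--Saff inequality applies only to weighted \emph{polynomials} $PW$ with $P\in\SP_n$; it does not by itself localize $\|(u-q)W\|_{L^s(\BR)}$, because $u-q$ is not a polynomial. After discarding the polynomial tail you are still left with $\|uW\|_{L^s(|y|>Ca_n)}$, and bounding that quantity by $(a_n/n)^r\|u^{(r)}W\|_{L^s(\BR)}$ is a separate, nontrivial estimate that your step (i) silently assumes. Second, the degree bookkeeping in step (iv) is inconsistent: with $M\asymp n$ subintervals, polynomials $\phi_j$ of degree $\asymp n/M=O(1)$ cannot sum to $1$ on $[-a_n,a_n]$ while concentrating on intervals of length $\asymp a_n/n$ and decaying rapidly off them. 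The known constructions take each $\phi_j$ of degree $\asymp n$, and one must then separately arrange that $q=\sum_j T_j\phi_j$ lands in $\SP_n$ (e.g., by building the partition at degree $n-r$) and control the cross terms where the $\phi_j$ overlap. The published proof sidesteps this splicing by routing the estimate through the weighted Ditzian--Totik modulus of smoothness and a Jackson--Favard-type inequality for Freud weights, which is presumably why the authors cite the result rather than reprove it; your sketch is a plausible roadmap to an alternative direct proof, but it is not yet a proof.
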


\begin{proof}[Proof of Theorem~\ref{kw}] 
Fix $k\in \BN$ and $n\ge \max(k-1,1).$ We apply Theorem~\ref{rq} for the function $u(y)=\BE[X\mid Y=y],$ the weight $W=\sqrt{p_Y},$ and for $s=2.$ By our choice of weight, $\|hW\|_{L^2(\BR)} = \|h(Y)\|_2$ for any Borel $h:\BR\to \BR.$  Recall from~\eqref{kp} that $E_n[X\mid Y]$ minimizes $\| q(Y) - \BE[X\mid Y\|_2$ over $q(Y)\in \SP_n(Y).$ By~\eqref{sx}, we have the bound $a_n = O_p(\sqrt{n}).$ Furthermore, by Theorem~\ref{og}, $\|(d^k/dy^k)\BE[X\mid Y]\|_2 = O_k(1).$ Note that $W\in \SF,$ because $W^2=p_Y\in \SF$ by Theorem~\ref{if}. Therefore, by Theorem~\ref{rq}, we obtain a constant $D(p,k)$ such that
\begin{equation}
    \|E_n[X\mid Y] - \mathbb{E}[X \mid Y]\|_2 \le \frac{D(p,k)}{n^{k/2}},
\end{equation}
as desired.
\end{proof}

\clearpage 
\newpage 

\begin{appendices}

\section{A Derivation of Equation \texorpdfstring{\eqref{rx}}{(17)}} \label{aaj}

Using the notation of \cite{partition}, we have that
\begin{equation} \label{aaa}
    C_r = \sum_{k=1}^r (k-1)! \stirling{r}{k}_{\ge 2}
\end{equation}
where $\stirling{r}{k}_{\ge 2}$ denotes the number of partitions of an $r$-element set into $k$ subsets each of which contains at least $2$ elements (note that there are $(k-1)!$ cyclically-invariant arrangements of $k$ parts). The exponential generating function of the sequence $r \mapsto \stirling{r}{k}_{\ge 2}$ is $(e^x-1-x)^k/k!.$ Now, we may write
\begin{equation}
    (e^x-1-x)^k = \sum_{a+b\le k} \binom{k}{a,b} (-1)^{k-a}x^b \sum_{t\in \BN} \frac{(ax)^t}{t!}.
\end{equation}
Therefore, the coefficient of $x^r$ in $(e^x-1-x)^k/k!$ is
\begin{align}
    \frac{1}{r!}\stirling{r}{k}_{\ge 2} &= \sum_{a+b \le k} \frac{(-1)^{k-a}a^{r-b}}{a!b!(k-a-b)!(r-b)!} \\
    &= \frac{1}{r!} \sum_{b=0}^k \binom{r}{b} \sum_{a=0}^{k-b} (-1)^{k-a} \frac{a^{r-b}}{a!(k-a-b)!} \\
    &= \frac{1}{r!} \sum_{b=0}^k \binom{r}{b} \stirling{r-b}{k-b} (-1)^b,
\end{align}
which when combined with~\eqref{aaa} gives~\eqref{rx}.

\section{Proof of Lemma~\ref{rv}} \label{aae}

Assume that $\BE\left[|X|^{\deg(p)}\right]=\infty$ (so $\deg(p)\ge 1$), and we will show that $\BE\left[|p(X)|\right] = \infty$ too. Let $k\in [\deg(p)-1]$ be the largest integer for which $\BE\left[ |X|^k \right] < \infty,$ and write $p(u)=u^{k+1}q(u)+r(u)$ for a nonzero polynomial $q$ and a remainder $r\in \SP_k.$ By monotonicity of norms, $\BE\left[ |X|^j \right]<\infty$ for every $j\in [k].$ Hence, $r(X)$ is integrable. Therefore, it suffices to prove that $X^{k+1}q(X)$ is non-integrable, which we show next.

Consider the set $\calD = \left\{ u\in \BR ~ ; ~ \left| q(u) \right| < |a| \right\}$ where $a\neq 0$ is the leading coefficient of $q.$ If $q$ is constant, then $\calD$ is empty, whereas if $\deg q \ge 1$ then $|q(u)|\to \infty$ as $|u|\to \infty$ implies that $\calD$ is bounded; in either case, there is an $M\in \BR$ such that $\calD\subset [-M,M].$ Now, writing $1=1_{\calD}+1_{\calD^c},$ we obtain
\begin{equation} \label{ru}
    \BE\left[ |X|^{k+1} |q(X)| \right] \ge |a| ~ \BE\left[ |X|^{k+1} 1_{\calD^c}(X) \right].
\end{equation}
But we also have that
\begin{equation}
    \infty = \BE\left[ |X|^{k+1}\right] \le M^{k+1} +  \BE\left[ |X|^{k+1} 1_{\calD^c}(X) \right],
\end{equation}
so $\BE\left[ |X|^{k+1} 1_{\calD^c}(X) \right]=\infty.$ Therefore, inequality \eqref{ru} yields that $\BE\left[ \left|X^{k+1}q(X)\right|\right]=\infty,$ concluding the proof. 

\section{Proof of Proposition~\ref{rs}} \label{aad}

Recall that the conditional expectation can be expressed as
\begin{equation} \label{aab}
    \BE[Z \mid Y=y] = \frac{\BE\left[ Z e^{-(X-y)^2/2} \right]}{\BE\left[ e^{-(X-y)^2/2} \right]}
\end{equation}
for any RV $Z$ for which $Z e^{-(X-y)^2/2}$ is integrable. This formula applies for both $Z=X$ and $Z=(X-\BE[X\mid Y=y])^k,$ where $(y,k)\in \BR\times \BN,$ because they are polynomials in $X$ and the map $x\mapsto q(x)e^{-(x-y)^2/2}$ is bounded for any polynomial $q.$

Differentiating~\eqref{aab} for $Z=X$ and rearranging terms, we obtain
\begin{equation} \label{ob}
    \frac{d}{dy} ~ \BE[X \mid Y=y] = \frac{\BE\left[ (X-\BE[X \mid Y=y])^2 e^{-(X-y)^2/2} \right]}{\BE\left[ e^{-(X-y)^2/2} \right]},
\end{equation}
i.e., $f'=g_2.$ Note that $g_0 \equiv 1$ and $g_1 \equiv 0.$ Differentiating $g_r$ for $r\ge 1,$ we obtain that
\begin{equation} \label{oc}
    g_r' = g_{r+1} - r g_2 g_{r-1}.
\end{equation} 
We apply successive differentiation to $f'=g_2$ and recover patterns by utilizing~\eqref{oc} at each step. 

From $f'=g_2$ and \eqref{oc}, we infer the first few derivatives
\begin{equation} \label{oe}
    f^{(2)} = g_3, ~ f^{(3)} = g_4-3g_2^2, ~ f^{(4)} = g_5 - 10g_2g_3.
\end{equation}
We see a homogeneity in \eqref{oe}, namely, $f^{(r-1)}$ is an integer linear combination of terms of the form $g_{i_1}^{\alpha_1}\cdots g_{i_\ell}^{\alpha_\ell}$ with $i_1\alpha_1 + \cdots + i_\ell \alpha_\ell = r.$ This homogeneity can be shown to hold for a general $r$ by induction, which we show next. For most of the remainder of the proof, we forget the numerical values of the $f^{(k)}$ and the $g_r^{(k)}$ and only treat them as symbols satisfying $f'=g_2$ and $g_r'=g_{r+1}-rg_2g_{r-1}$ that respect rules of differentiation and which commute.

We call $\sum_{j=1}^\ell i_j \alpha_j$ the \textit{weighted degree} of any nonzero integer multiple of $g_{i_1}^{\alpha_1}\cdots g_{i_\ell}^{\alpha_\ell}.$ This is a well-defined degree because it is invariant to the way the product is arranged. We also say that a sum is of weighted degree $r$ if each summand is of weighted degree $r.$ To prove the claim of homogeneity, i.e., that $f^{(r-1)}$ is of weighted degree $r,$ we differentiate and apply the relation in \eqref{oc} to a generic term $g_{i_1}^{\alpha_1}\cdots g_{i_\ell}^{\alpha_\ell}$ whose weighted degree is $r.$ We have the derivative
\begin{equation} \label{of}
    \left( g_{i_1}^{\alpha_1}\cdots g_{i_\ell}^{\alpha_\ell} \right)' =  \left(g_{i_1}^{\alpha_1}\right)'\cdots g_{i_\ell}^{\alpha_\ell} + \cdots +  g_{i_1}^{\alpha_1}\cdots \left(g_{i_\ell}^{\alpha_\ell} \right)'.
\end{equation}
From \eqref{oc}, for integers $i,\alpha\ge 1,$ 
\begin{equation} \label{oh}
    \left(g_{i}^{\alpha}\right)' = \alpha g_i^{\alpha-1} g_{i+1} - \alpha i g_2 g_{i-1} g_i^{\alpha-1}.
\end{equation}
Therefore, the derivative of $g_i^{\alpha}$ has weighted degree $i\alpha +1.$ In other words, differentiation increased the weighted degree of $g_i^{\alpha}$ by $1.$ From \eqref{of}, then, we see that the weighted degree of $\left( g_{i_1}^{\alpha_1}\cdots g_{i_\ell}^{\alpha_\ell} \right)'$ is $r+1.$ Since $f'=g_2$ is of weighted degree $2,$ induction and linearity of differentiation yield that $f^{(r-1)}$ is of weighted degree $r$ for each $r\ge 2.$

Now, we fix the way we are writing products of the $g_i.$ We ignore explicitly writing $g_0$ and $g_1,$ collect identical terms into an exponent, and write lower indices first. One way to keep this notation is via integer partitions. Consider the ``homogeneous" sets
\begin{equation}
    G_r := \left\{ \sum_{\blambda \in \Pi_r} \beta_{\blambda} \bg^{\blambda} ~ ; ~ \beta_{\blambda} \in \BZ ~ \text{for each} ~ \blambda \in \Pi_r \right\}.
\end{equation}
The homogeneity property for the derivatives of $f$ can be written as $f^{(r-1)} \in G_r$ for each $r\ge 2.$ 

Next, we investigate the exact integer coefficients $h_{\blambda}$ in the expression of the derivatives of $f$ in terms of the $\bg^{\blambda}.$ Homogeneity of the derivatives of $f$ says that we may write each $f^{(r-1)},$ $r\ge 2,$ as an integer linear combination of $\{ \bg^{\blambda}\}_{\blambda \in \Pi_r}.$ One way to obtain such a representation is via repeated differentiation of $f'=g_2,$ applying the relation \eqref{oh}, and discarding any term that is a multiple of $g_1.$ Applying these steps, we arrive at representations
\begin{equation}
    f^{(r-1)} = \sum_{\blambda \in \Pi_r} h_{\blambda} \bg^{\blambda}, ~~~ c_{\blambda} \in \BZ.
\end{equation}

The terms $\bg^{\bnu}$ that appear upon differentiating a term $\bg^{\blambda}$ can be described as follows. For $(\lambda_2,\cdots,\lambda_\ell)=\blambda \in \Pi_r,$ we call $\lambda_2$ the leading term of $\blambda.$ Consider for a tuple $\blambda \in \Pi_r$ the following two sets of tuples $\tau_+(\blambda),\tau_-(\blambda) \subset \Pi_{r+1}$:
\begin{itemize}
    \item The set $\tau_+(\blambda)$ consists of all tuples obtainable from $\blambda$ via replacing a pair $(\lambda_i,\lambda_{i+1})$ with $(\lambda_i-1,\lambda_{i+1}+1)$ (so, necessarily $\lambda_i\ge 1$) while keeping all other entries unchanged;
    
    \item The set $\tau_-(\blambda)$ consists of all tuples obtainable from $\blambda$ via replacing a pair $(\lambda_{i-1},\lambda_i),$ for which $i\ge 3,$ with the pair $(\lambda_{i-1}+1,\lambda_i-1)$ (so, necessarily $\lambda_i \ge 1$) and additionally increasing the leading term by $1$ while keeping all other terms unchanged.
\end{itemize} 
For example, if $\blambda = (0,5,0,1) \in \Pi_{20}$ then
\begin{equation}
    \tau_+(\blambda) = \{ (0,4,1,1),(0,5,0,0,1) \} \subset \Pi_{21}
\end{equation}
and
\begin{equation}
    \tau_-(\blambda) = \{(2,4,0,1),(1,5,1) \} \subset \Pi_{21}.
\end{equation}
The relation \eqref{oh} yields, in view of
\begin{equation} \label{ok}
    \left( g_{2}^{\lambda_2}\cdots g_{\ell}^{\lambda_\ell} \right)' =  \left(g_{2}^{\lambda_2}\right)'\cdots g_{\ell}^{\lambda_\ell} + \cdots +  g_{2}^{\lambda_2}\cdots \left(g_{\ell}^{\lambda_\ell} \right)',
\end{equation}
that
\begin{equation} \label{ol}
    \left( \bg^{\blambda} \right)' = \sum_{\bnu \in \tau_+(\blambda)} a_{\blambda,\bnu}  \bg^{\bnu} - \sum_{\bnu \in \tau_-(\blambda)} b_{\blambda,\bnu} \bg^{\bnu}
\end{equation}
for some positive integers $a_{\blambda,\bnu}$ and $b_{\blambda,\bnu},$ which we describe next. Finding $a_{\blambda,\bnu}$ and $b_{\blambda,\bnu}$ can be straightforwardly done from \eqref{oh} in view of \eqref{ok}. If $\bnu \in \tau_+(\blambda),$ say
\begin{equation}
    (\nu_i,\nu_{i+1}) = (\lambda_i-1,\lambda_{i+1}+1),
\end{equation}
then $a_{\blambda,\bnu} = \lambda_i.$ If $\bnu \in \tau_-(\blambda),$ say
\begin{equation}
    (\nu_{i-1},\nu_{i}) = (\lambda_{i-1}+1,\lambda_{i}-1),
\end{equation}
then $b_{\blambda,\bnu} = i \lambda_i.$ In our example of $\blambda = (0,5,0,1),$ we get
\begin{align}
    a_{(0,5,0,1),(0,4,1,1)} &= 5 \\
    a_{(0,5,0,1),(0,5,0,0,1)} &= 1,
\end{align}
whereas
\begin{align}
    b_{(0,5,0,1),(2,4,0,1)} &= 15 \\
    b_{(0,5,0,1),(1,5,1)} &= 5.
\end{align}
Note that the two sets $\tau_+(\blambda)$ and $\tau_-(\blambda)$ are disjoint because, e.g., the sum of entries of a tuple in $\tau_+(\blambda)$ is the same as that for $\blambda,$ whereas the sum of entries of a tuple in $\tau_-(\blambda)$ is one more than that for $\blambda.$

We next describe how to use what we have shown thus far to deduce a recurrence relation for the  $h_{\blambda}.$ Let $\theta$ be a process inverting $\tau,$ i.e., define for $\bnu \in \Pi_{r+1}$ the two sets
\begin{equation}
    \theta_+(\bnu) := \left\{ \blambda \in \Pi_{r} ~ ; ~ \bnu \in \tau_+(\blambda) \right\}
\end{equation}
and 
\begin{equation}
    \theta_-(\bnu) := \left\{ \blambda \in \Pi_r ~ ; ~ \bnu \in \tau_-(\blambda) \right\}
\end{equation}
The two sets $\theta_+(\bnu)$ and $\theta_-(\bnu)$ are disjoint because the two sets $\tau_+(\blambda)$ and $\tau_-(\blambda)$ are disjoint for each fixed $\blambda.$ Recall our process for defining $h_{\blambda}$: we start with $f'=g_2,$ so $h_{(1)}=1$; we successively differentiate $f'=g_2$; after each differentiation, we use \eqref{oh} and \eqref{ok} (recall that we have the understanding $g_i^0=1$); we discard any ensuing multiple of $g_1$; after $r-2$ differentiations, we get an equation $f^{(r-1)}=\sum_{\blambda \in \Pi_r} h_{\blambda} \bg^{\blambda},$ which we take to be the definition of the $h_{\blambda}.$ The point here is that it could be that $f^{(r-1)}$ is representable as an integer linear combination of the $\bg^{\blambda}$ in more than one way, which can only be verified after the numerical values for the $g_i$ are taken into account, but we are not doing that: our approach treats the $g_i$ as symbols following the laid out rules. Now, we look at one of the steps of this procedure, starting at differentiating $f^{(r-1)} = \sum_{\blambda \in \Pi_r} h_{\blambda} \bg^{\blambda},$ so $f^{(r)} = \sum_{\blambda \in \Pi_r} h_{\blambda} \left( \bg^{\blambda} \right)'.$ Replacing $(\bg^{\blambda})'$ via \eqref{ol},
\begin{equation}
    f^{(r)} = \sum_{\blambda \in \Pi_r} \left( \sum_{\bnu \in \tau_+(\blambda)} a_{\blambda,\bnu}  \bg^{\bnu} - \sum_{\bnu \in \tau_-(\blambda)} b_{\blambda,\bnu} \bg^{\bnu} \right).
\end{equation}
Exchanging the order of summations (for which we use $\theta$),
\begin{equation}
    f^{(r)} = \sum_{\bnu \in \Pi_{r+1}} \left( \sum_{\blambda \in \theta_+(\bnu)} h_{\blambda} a_{\blambda,\bnu} - \sum_{\blambda \in \theta_-(\bnu)} h_{\blambda} b_{\blambda,\bnu} \right) \bg^{\bnu}.
\end{equation}
Therefore, by definition of the $h_{\blambda},$ we have the recurrence: for each $\bnu \in \Pi_{r+1}$ 
\begin{equation} \label{om}
    h_{\bnu} = \sum_{\blambda \in \theta_+(\bnu)} h_{\blambda} a_{\blambda,\bnu} - \sum_{\blambda \in \theta_-(\bnu)} h_{\blambda} b_{\blambda,\bnu}, \quad h_{(1)}=1.
\end{equation}
One instance of this recurrence is, e.g.,
\begin{equation}
    h_{(2,1)} = 3h_{(3)} - 4h_{(1,0,1)}-6h_{(0,2)}.
\end{equation}

Now, we show that the recurrence in~\eqref{om} also generates $e_{\blambda}$ as defined in~\eqref{sp}. For $(\lambda_2,\cdots,\lambda_\ell)=\blambda\in\Pi_r,$ denote $\sigma(\blambda) = \lambda_2+\cdots+\lambda_\ell.$ If $\bnu \in \tau_+(\blambda)$ then $\sigma(\bnu)=\sigma(\blambda),$ and if $\bnu\in \tau_-(\blambda)$ then $\sigma(\bnu)=\sigma(\blambda)+1.$ Therefore, $\blambda \in \theta_+(\bnu)$ implies $\sigma(\bnu)=\sigma(\blambda),$ and $\blambda\in \theta_-(\bnu)$ implies $\sigma(\bnu)=\sigma(\blambda)+1.$ Multiplying~\eqref{om} by $(-1)^{\sigma(\bnu)-1}$ yields the equivalent recurrence
\begin{equation} \label{aac}
    t_{\bnu} = \sum_{\blambda \in \theta_+(\bnu)} t_{\blambda} a_{\blambda,\bnu} + \sum_{\blambda \in \theta_-(\bnu)} t_{\blambda} b_{\blambda,\bnu}, \quad t_{(1)}=1,
\end{equation}
where $t_{\blambda}:= (-1)^{\sigma(\blambda)-1}h_{\blambda}.$ We show that $c_{\blambda}=(-1)^{\sigma(\blambda)-1}e_{\blambda}$ (see~\eqref{sp}) satisfies this recurrence, which is equivalent to $e_{\blambda}$ satisfying the recurrence~\eqref{om}. Clearly, $c_{(1)}=1,$ so consider $c_{\bnu}$ for $\bnu \in \Pi_r$ with $r\ge 3.$

Consider labelled elements $s_1,s_2,\cdots,$ and let $S_k=\{s_1,\cdots,s_k\}$ for each $k\ge 2.$ For any $\blambda\in \Pi_k,$ let $\calC_{\blambda}$ be the set of arrangements of cyclically-invariant set-partitions of $S_k$ according to $\blambda,$ so $|\calC_{\blambda}|=c_{\blambda}.$ Now, fix $\bnu \in \Pi_{r+1},$ and we will build $\calC_{\bnu}$ from the $\calC_{\blambda}$ where $\blambda$ ranges over $\theta_+(\bnu)\cup \theta_-(\bnu).$ Consider first $\blambda\in\theta_+(\bnu),$ where a partition in $\calC_{\bnu}$ is constructed from a partition in $\calC_{\blambda}$ by appending $s_{r+1}$ to one of the parts of the latter partition. Note that adding $s_{r+1}$ to two distinct partitions of $S_r$ cannot produce the same partition of $S_{r+1}$; indeed, just removing $s_{r+1}$ shows that that is impossible. Now, let $i$ be the unique index such that $(\nu_i,\nu_{i+1}) = (\lambda_i-1,\lambda_{i+1}+1).$ Then, a partition $\calP\in \calC_{\bnu}$ of $S_{r+1}$ is induced by a partition $\calP'\in\calC_{\blambda}$ of $S_r$ if and only if $s_{r+1}$ is added to a part in $\calP'$ of size $i,$ of which there are exactly $\lambda_i = a_{\blambda,\bnu}.$ Therefore, we get a contribution of $\sum_{\blambda \in \theta_+(\bnu)} c_{\blambda} a_{\blambda,\bnu}$ towards $c_{\bnu},$ which is the first part in~\eqref{aac}. 

For the second part, $\sum_{\blambda \in \theta_-(\bnu)} c_{\blambda} b_{\blambda,\bnu},$ we consider the remaining ways of generating a partition in $\calC_{\bnu}$ from a partition according to some $\blambda\in \theta_-(\bnu).$ In this case, $s_{r+1}$ is not appended to an existing part, but it is used to create a new part of size $2.$ Thus, we need to also move an element $s_j,$ $1\le j \le r,$ from a part of size at least $3$ to be combined with $s_{r+1}$ to create a new part of size $2.$ It is also clear in this case that such a procedure applied to two distinct partitions in $\calC_{\blambda}$ cannot produce the same partition in $\calC_{\bnu}.$ Let $i$ be the unique index for which $(\nu_{i-1},\nu_{i}) = (\lambda_{i-1}+1,\lambda_{i}-1).$ There are $\lambda_i$ parts to choose from, and $i$ elements to choose from once a part is chosen, so there are a total of $i\lambda_i = b_{\blambda,\bnu}$ ways to generate a partition in $\calC_{\bnu}$ from a partition in $\calC_{\blambda}.$ This gives the second sum in~\eqref{aac}, and we conclude that
\begin{equation}
    c_{\bnu} = \sum_{\blambda \in \theta_+(\bnu)} c_{\blambda} a_{\blambda,\bnu} + \sum_{\blambda \in \theta_-(\bnu)} c_{\blambda} b_{\blambda,\bnu}.
\end{equation}

Therefore, the $c_{\blambda}$ and the $t_{\blambda}$ satisfy the same recurrence, which takes the form: for $\bnu\in \Pi_{r+1}$ there are integers $\{ d_{\blambda,\bnu} \}_{\blambda\in \Pi_r}$ such that
\begin{equation} \label{aah}
    u_{\bnu} = \sum_{\blambda\in \Pi_r} d_{\blambda,\bnu} u_{\blambda}
\end{equation}
with the initial condition $u_{(1)}=1.$ Then, we can induct on $r$ to conclude that the $c_{\blambda}$ and the $t_{\blambda}$ are the same sequence. Since $\Pi_2= \{(1)\},$ we see that $c_{\blambda}=t_{\blambda}$ for every $\blambda \in \Pi_2.$ Suppose $r\ge 2$ is such that $c_{\blambda}=t_{\blambda}$ for every $\blambda \in \Pi_r.$ Hence, for every $\bnu\in \Pi_{r+1},$ we have that 
\begin{equation} \label{aai}
    \sum_{\blambda\in \Pi_r} d_{\blambda,\bnu} c_{\blambda} = \sum_{\blambda\in \Pi_r} d_{\blambda,\bnu} t_{\blambda}.
\end{equation}
Since both sequences $c_{\blambda}$ and $t_{\blambda}$ satisfy the recurrence \eqref{aah}, we obtain from~\eqref{aai} that $c_{\bnu}=t_{\bnu}$ for every $\bnu \in \Pi_{r+1}.$ Therefore, we obtain by induction that $c_{\blambda} = t_{\blambda}$ for every $\blambda\in \Pi_r$ for every $r,$ as desired.

\section{Proof of Theorem~\ref{if}} \label{aaf}

Fix $p\in \SD,$ suppose $X\sim p,$ and write $Y=X+N$ and $p_Y=e^{-Q}.$ First, we note that $Q'(y)$ is equal to $\mathbb{E}[N \mid Y=y].$

\begin{lemma} \label{ii}
Fix a random variable $X$ and let $Y=X+N$ where $N\sim \mathcal{N}(0,1)$ is independent of $X.$ Writing $p_Y=e^{-Q},$ we have that $ Q'(y) = \mathbb{E}[N \mid Y=y].$
\end{lemma}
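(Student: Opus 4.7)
The plan is to compute $p_Y'/p_Y$ directly from the convolution representation of $p_Y$ and recognize the result as a conditional expectation, then equate with the logarithmic derivative $-Q'$ coming from $p_Y = e^{-Q}$.

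First, I would write $p_Y$ as a convolution with the standard Gaussian density $\varphi(u) = e^{-u^2/2}/\sqrt{2\pi}$:
\begin{equation}
    p_Y(y) \;=\; \BE\!\left[ \varphi(y-X) \right] \;=\; \frac{1}{\sqrt{2\pi}}\,\BE\!\left[ e^{-(X-y)^2/2} \right].
\end{equation}
This is the same representation used in~\eqref{st} and~\eqref{aab}. Since $\varphi'(u) = -u\,\varphi(u)$, the integrand's $y$-derivative is $(X-y)\varphi(y-X)$, which is bounded in absolute value by the integrable (in $P_X$) envelope $\sup_{y\in K}|X-y|\,\|\varphi\|_{L^\infty(\mathbb{R})}$ locally uniformly in $y$ on any compact $K$. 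Therefore dominated convergence permits differentiation under the expectation, giving
\begin{equation}
    p_Y'(y) \;=\; \BE\!\left[ (X-y)\,\varphi(y-X) \right].
\end{equation}

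Next, I divide by $p_Y(y)$, which is strictly positive by the convolution formula. Using the conditional-expectation identity~\eqref{aab} with $Z = X - y$ (which is a degree-one polynomial in $X$, so $Z\,\varphi(y-X)$ is integrable), I obtain
\begin{equation}
    \frac{p_Y'(y)}{p_Y(y)} \;=\; \BE[\,X - y \mid Y = y\,] \;=\; -\,\BE[\,y - X \mid Y = y\,].
\end{equation}
On the event $\{Y = y\}$ we have $N = Y - X = y - X$, so $\BE[y - X \mid Y = y] = \BE[N \mid Y = y]$.

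Finally, writing $p_Y = e^{-Q}$ and differentiating gives $p_Y'/p_Y = -Q'$, so comparing with the previous display yields $Q'(y) = \BE[N \mid Y = y]$, as desired. The only nontrivial point is the justification of differentiation under the expectation, which is routine due to the Gaussian tail of $\varphi$; everything else is algebraic manipulation of the convolution identity together with the explicit kernel formula~\eqref{aab} for conditional expectation in the Gaussian channel.
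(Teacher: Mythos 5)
Your proof is correct and follows essentially the same route as the paper's: write $p_Y$ as the Gaussian convolution, differentiate under the expectation, recognize $p_Y'/p_Y$ as $\BE[X-y\mid Y=y]$ via the kernel formula, and substitute $N=Y-X$. One small point: your dominating envelope $\sup_{y\in K}|X-y|\,\|\varphi\|_{L^\infty(\BR)}$ is $P_X$-integrable only when $\BE[|X|]<\infty$, which the lemma does not assume; the cleaner domination (and the one the paper uses) is that $u\mapsto u\,e^{-u^2/2}$ is uniformly bounded, so $|(X-y)\varphi(y-X)|$ is bounded by a deterministic constant and the interchange is valid for arbitrary $X$.
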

\begin{proof}
We have that $p_Y(y) = \mathbb{E}[e^{-(y-X)^2/2}]/\sqrt{2\pi}.$ Differentiating this equation, we obtain that $p_Y'(y) = \mathbb{E}[(X-y)e^{-(y-X)^2/2}]/\sqrt{2\pi},$ where the exchange of differentiation and integration is warranted since $t\mapsto te^{-t^2/2}$ is bounded. 
Now, $Q=-\log p_Y,$ so $Q' = -p_Y'/p_Y,$ i.e.,
\begin{equation} \label{ij}
    Q'(y) = y - \frac{\mathbb{E}[Xe^{-(y-X)^2/2}]}{\mathbb{E}[e^{-(y-X)^2/2}]} = y - \BE\left[ X \mid Y = y \right].
\end{equation}
The proof is completed by substituting $X= Y-N.$
\end{proof}

In view of Lemma~\ref{ii}, that $p$ is even and non-increasing over $[0,\infty)\cap \mathrm{supp}(p)$ imply that $Q$ satisfies conditions \ref{aak}--\ref{aan} of Definition~\ref{ig}. It remains to show that property~\ref{aao} holds. To this end, we show that if $\supp(p) \subset [-M,M]$ and $\lambda=M+2,$ then for every $y>M+4$ we have that  
\begin{equation}
    1<\frac{M^2+5M+8}{2(M+2)} \le \frac{Q'(\lambda y)}{Q'(y)} \le \frac{M^2+7M+8}{4}.
\end{equation}
First, since $Q'(y)=y-\BE[X\mid Y=y]$ (see \eqref{ij}), we have the bounds $y-M \le Q'(y) \le y+M$ for every $y\in \BR.$ Therefore, $y>M$ and $\lambda>1$ imply that
\begin{equation}
    \frac{\lambda y - M}{y+M} \le \frac{Q'(\lambda y)}{Q'(y)} \le \frac{\lambda y +M}{y-M}.
\end{equation}
Further, since $y>M+4$ and $\lambda=M+2,$ we have
\begin{equation}
    \frac{M^2+5M+8}{2(M+2)} < \lambda - \frac{M(M+3)}{y+M} = \frac{\lambda y - M}{y+M}
\end{equation}
and
\begin{equation}
    \frac{\lambda y +M}{y-M} = \lambda + \frac{M(M+3)}{y-M} \le \frac{M^2+7M+8}{4}.
\end{equation}
The fact that $1<\frac{M^2+5M+8}{2(M+2)}$ follows since the discriminant of $M^2+3M+4$ is $-7<0.$ Therefore, $p_Y$ is a Freud weight.

\section{Proof of Inequality \texorpdfstring{\eqref{sx}}{(52)}} \label{aag}

By Lemma~\ref{ii}, 
\begin{equation}
    Q'(y)= \BE[N\mid Y=y] = y - \BE[X \mid Y=y].
\end{equation}
Therefore $X\le M$ implies that, for any constant $z \ge 0,$ we have
\begin{align}
    \int_0^1 \frac{ztQ'(zt)}{\sqrt{1-t^2}} \, dt &= \frac{\pi}{4} z^2 - z \int_0^1 \hspace{-4pt} \frac{t}{\sqrt{1-t^2}}  \frac{\mathbb{E}\left[ X e^{-(X-zt)^2/2} \right]}{ \mathbb{E}\left[e^{-(X-zt)^2/2}\right]} dt \\
    &\ge \frac{\pi}{4} z^2 - Mz.
\end{align}
We have $\pi z^2/4-Mz>n$ for $z=(2M+\sqrt{2})\sqrt{n}.$ Since $y \mapsto yQ'(y)$ is strictly increasing over $(0,\infty)$ (condition~\ref{aam} of Definition~\ref{ig}), we conclude that $a_n(Q)\le (2M+\sqrt{2})\sqrt{n}.$

\end{appendices}

\bibliographystyle{IEEEtran}
\bibliography{Biblio}

\begin{thebibliography}{10}
\providecommand{\url}[1]{#1}
\csname url@samestyle\endcsname
\providecommand{\newblock}{\relax}
\providecommand{\bibinfo}[2]{#2}
\providecommand{\BIBentrySTDinterwordspacing}{\spaceskip=0pt\relax}
\providecommand{\BIBentryALTinterwordstretchfactor}{4}
\providecommand{\BIBentryALTinterwordspacing}{\spaceskip=\fontdimen2\font plus
\BIBentryALTinterwordstretchfactor\fontdimen3\font minus
  \fontdimen4\font\relax}
\providecommand{\BIBforeignlanguage}[2]{{%
\expandafter\ifx\csname l@#1\endcsname\relax
\typeout{** WARNING: IEEEtran.bst: No hyphenation pattern has been}%
\typeout{** loaded for the language `#1'. Using the pattern for}%
\typeout{** the default language instead.}%
\else
\language=\csname l@#1\endcsname
\fi
#2}}
\providecommand{\BIBdecl}{\relax}
\BIBdecl

\bibitem{Guo11}
D.~{Guo}, Y.~{Wu}, S.~S. {Shitz}, and S.~{Verdu}, ``Estimation in gaussian
  noise: Properties of the minimum mean-square error,'' \emph{IEEE Transactions
  on Information Theory}, vol.~57, no.~4, pp. 2371--2385, 2011.

\bibitem{DytsoDer}
A.~{Dytso}, H.~V. {Poor}, and S.~S. {Shitz}, ``A general derivative identity
  for the conditional mean estimator in gaussian noise and some applications,''
  in \emph{2020 IEEE International Symposium on Information Theory (ISIT)},
  2020, pp. 1183--1188.

\bibitem{Moments}
W.~{Alghamdi} and F.~P. {Calmon}, ``Mutual information as a function of
  moments,'' in \emph{2019 IEEE International Symposium on Information Theory
  (ISIT)}, 2019, pp. 3122--3126.

\bibitem{Lubinsky2007}
D.~Lubinsky, ``A survey of weighted polynomial approximation with exponential
  weights,'' \emph{Surveys in Approximation Theory}, vol.~3, pp. 1--105, 2007.

\bibitem{Lotsch2009}
Z.~Ditzian and V.~Totik, \emph{{Moduli of Smoothness}}.\hskip 1em plus 0.5em
  minus 0.4em\relax Springer New York, 1987.

\bibitem{Guo2005}
D.~{Guo}, S.~{Shamai}, and S.~{Verdu}, ``Mutual information and minimum
  mean-square error in gaussian channels,'' \emph{IEEE Transactions on
  Information Theory}, vol.~51, no.~4, pp. 1261--1282, 2005.

\bibitem{Wu2012}
Y.~{Wu} and S.~{Verdu}, ``Functional properties of minimum mean-square error
  and mutual information,'' \emph{IEEE Transactions on Information Theory},
  vol.~58, no.~3, pp. 1289--1301, 2012.

\bibitem{Han16}
G.~{Han} and J.~{Song}, ``Extensions of the {I-MMSE} relationship to gaussian
  channels with feedback and memory,'' \emph{IEEE Transactions on Information
  Theory}, vol.~62, no.~10, pp. 5422--5445, 2016.

\bibitem{Reeves19}
G.~{Reeves}, V.~{Mayya}, and A.~{Volfovsky}, ``The geometry of community
  detection via the {MMSE} matrix,'' in \emph{2019 IEEE International Symposium
  on Information Theory (ISIT)}, 2019, pp. 400--404.

\bibitem{Wu11}
Y.~{Wu} and S.~{Verdu}, ``{MMSE} dimension,'' \emph{IEEE Transactions on
  Information Theory}, vol.~57, no.~8, pp. 4857--4879, 2011.

\bibitem{Cha2018}
S.~{Cha} and T.~{Moon}, ``Neural adaptive image denoiser,'' in \emph{2018 IEEE
  International Conference on Acoustics, Speech and Signal Processing
  (ICASSP)}, 2018, pp. 2981--2985.

\bibitem{Robbins}
H.~{Robbins}, ``An empirical bayes approach to statistics,'' in \emph{Proc
  Third Berkeley Symp Math Statist Probab}, 1956, pp. 157--163.

\bibitem{OEIS}
\BIBentryALTinterwordspacing
{OEIS Foundation Inc. (2021), The On-Line Encyclopedia of Integer Sequences}.
  [Online]. Available: \url{http://oeis.org/A032181}
\BIBentrySTDinterwordspacing

\bibitem{Bryc}
W.~Bryc, \emph{The Normal Distribution}.\hskip 1em plus 0.5em minus 0.4em\relax
  Springer New York, 1995.

\bibitem{partition}
B.~Bényi and J.~L. Ramírez, ``\BIBforeignlanguage{eng}{{Some applications of
  S-restricted set partitions}},'' \emph{\BIBforeignlanguage{eng}{Periodica
  mathematica Hungarica}}, vol.~78, no.~1, pp. 110--127, 2019.

\end{thebibliography}

\end{document}